\title{
	Automatic complexity of Fibonacci and Tribonacci words
}
\author{Bj{\o}rn Kjos-Hanssen\thanks{
		This work was partially supported by grants from the
		Simons Foundation (\#315188 and \#704836 to Bj\o rn Kjos-Hanssen) and Decision Research Corporation (\emph{Automatic Complexity of Fibonacci arrays}, University of Hawai\textquoteleft i Foundation Account \#129-4770-4).
	}}
\tikzset{snake it/.style={decorate, decoration=snake}}
\newtheorem{thm}{Theorem}
\newtheorem{lem}[thm]{Lemma}
\newtheorem{rem}[thm]{Remark}
\theoremstyle{definition}
\newtheorem{df}[thm]{Definition}
\newcommand{\mt}{\mathtt}
\newcommand{\abs}[1]{\lvert#1\rvert}
\newcommand{\restrict}{\upharpoonright}
\newcommand*\ci[1]{\langle #1\rangle}
\newcommand*\di[1]{\| #1 \|}
\begin{document}
	\maketitle
	\begin{abstract}
		For a complexity function $C$, the lower and upper $C$-complexity rates of an infinite word $\mathbf{x}$ are
		\[
			\underline{C}(\mathbf x)=\liminf_{n\to\infty} \frac{C(\mathbf{x}\restrict n)}n,\quad
			 \overline{C}(\mathbf x)=\limsup_{n\to\infty} \frac{C(\mathbf{x}\restrict n)}n
		\]
		respectively. Here $\mathbf{x}\restrict n$ is the prefix of $x$ of length $n$.
		We consider the case $C=\mathrm{A_N}$, the nondeterministic automatic complexity.
		If these rates  are strictly between 0 and $1/2$, we call them intermediate.
		Our main result is that words having intermediate $\mathrm{A_N}$-rates exist, viz.~the infinite Fibonacci and Tribonacci words.
	\end{abstract}

	\section{Introduction}
		The automatic complexity of Shallit and Wang~\cite{MR1897300} is
		the minimal number of states of an automaton accepting only a given word among its equal-length peers.
		This paper continues a line of investigation into the automatic complexity of particular words of interest such as
		\begin{itemize}
			\item maximal length sequences for linear feedback shift registers~\cite{MR3828751},
			\item overlap-free and almost square-free words~\cite{MR3386523}, and
			\item random words~\cite{kjoshanssen2019incompressibility}.
		\end{itemize}
		All these examples have high complexity: to be precise, they have maximal automatic complexity rate (Definition~\ref{rate}).
		On the other hand, a periodic word has low complexity and a rate of 0.
		In the present paper we give the first examples of infinite words with intermediate automatic complexity rate: the infinite Fibonacci and Tribonacci words. 

		Automatic complexity is an automata-based and length-conditional analogue of
		Sipser's CD complexity~\cite{Sipser:1983:CTA:800061.808762} which is in turn a computable analogue of the noncomputable
		Kolmogorov complexity.
		The nondeterministic case was taken up by Hyde and Kjos-Hanssen~\cite{MR3386523}.
		We recall our basic notions. Let $\abs{x}$ denote the length of a word $x$.
		\begin{df}[{\cite{MR1897300}}]
			The \emph{nondeterministic automatic complexity} $\mathrm{A_N}(x)$ of a word $x$ is
			the minimal number of states of a nondeterministic finite automaton $M$ (without $\epsilon$-transitions)
			such that $M$ accepts $x$ and moreover
			there is only one accepting path in $M$ of length $\abs{x}$.
			We let $\mathrm{A}^-$ denote the deterministic but non-total automatic complexity, defined as follows:
			automata are required to be deterministic, but the transition functions need not be total: there does not need to be a transition for every symbol at every state.
		\end{df}
		Shallit and Wang's original automatic complexity $\mathrm{A}(x)$ does have the totality requirement.

	\subsection{Fibonacci and Tribonacci words}
		\begin{df}[$k$-bonacci numbers]
			For $k\ge 2$ and $n\ge 0$, the $n$th $k$-bonacci number $w_n=w_n^{(k)}$ is defined by $w_n=0$ if $n\le k-1$,
			$w_k=1$,
			and $w_n=\sum_{i=n-k}^{n-1}w_i$ for $n\ge k+1$.
		\end{df}
		In particular, the Fibonacci numbers are the $2$-bonacci numbers.

		Let $\Sigma_k=\{\mt a_0,\dots,\mt a_{k-1}\}$ be an alphabet of cardinality $k\ge 0$.
		We shall denote specific symbols in the alphabet as $\mt 0=\mt a_0$, $\mt 1 = \mt a_1$ and so on.
		\begin{df}[$k$-bonacci words]
			We define the $k$-morphism $\varphi_k:\Sigma_k\to \Sigma_k^*$ by
			\begin{eqnarray*}
				\varphi_k(\mt a_i)		&=& \mt a_0 \mt a_{i+1},\quad 0\le i\le k-2,\\
				\varphi_k(\mt a_{k-1}) 	&=& \mt a_{0}.
			\end{eqnarray*}
			We also let $\varphi$ act on words of length greater than 1, by the morphism property
			\[
				\varphi(uv)=\varphi(u)\varphi(v).
			\]
			Let $\varepsilon$ be the empty word.
			The $k$-bonacci word $W_n=W_n^{(k)}$ is defined by
			\begin{eqnarray*}
				W_n&=&\varepsilon,\quad 0\le n\le k-2,\\
				W_{k-1}&=&\mt a_{k-1},\\
				W_n&=&\varphi_k(W_{n-1}),\quad n\ge k.
			\end{eqnarray*}
		\end{df}
		\begin{lem}
			The length of the $n$th $k$-bonacci word is equal to the $n$th $k$-bonacci number: $\abs{W_n^{(k)}}=w_n^{(k)}$.
		\end{lem}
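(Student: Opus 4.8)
The plan is to argue by induction on $n$. The base cases are immediate from the definitions: for $0\le n\le k-2$ both $\abs{W_n^{(k)}}$ and $w_n^{(k)}$ vanish, and at $n=k-1$ both equal $1$. For the inductive step it suffices to show that $\abs{W_n}$ obeys the $k$-bonacci recurrence $\abs{W_n}=\sum_{i=n-k}^{n-1}\abs{W_i}$ once $n$ is large enough; the lemma then follows by matching recurrences and initial conditions.

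The mechanism is an analysis of how $\varphi_k$ changes length. Since $\varphi_k$ sends $\mt a_i\mapsto\mt a_0\mt a_{i+1}$ for $i\le k-2$ (a word of length $2$) and $\mt a_{k-1}\mapsto\mt a_0$ (length $1$), any word $v$ satisfies $\abs{\varphi_k(v)}=2\abs v-\occur(\mt a_{k-1},v)$, where $\occur(\mt a_{k-1},v)$ counts the occurrences of the last letter in $v$. Rather than track that single count it is cleaner to track all of them: set $c_i(n):=\occur(\mt a_i,W_n)$. Whenever $W_n=\varphi_k(W_{n-1})$, the definition of $\varphi_k$ gives $c_0(n)=\abs{W_{n-1}}$ (each letter of $W_{n-1}$ produces exactly one $\mt a_0$) and $c_i(n)=c_{i-1}(n-1)$ for $1\le i\le k-1$ (an $\mt a_i$ in the image can only arise from an $\mt a_{i-1}$). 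Unwinding the second relation gives $c_i(n)=c_0(n-i)=\abs{W_{n-i-1}}$, whence
\[
	\abs{W_n}=\sum_{i=0}^{k-1}c_i(n)=\sum_{i=0}^{k-1}\abs{W_{n-i-1}}=\sum_{j=1}^{k}\abs{W_{n-j}},
\]
which is exactly the $k$-bonacci recurrence.

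The one place requiring care is boundary bookkeeping: the relations for the $c_i$ are valid only while every word in sight is genuinely an image under $\varphi_k$, i.e. for $n$ beyond a threshold of order $2k$, so one must also verify $\abs{W_n}=w_n$ by hand for the finitely many small $n$ lying between the trivial base cases and that threshold — a routine finite check. An essentially equivalent alternative is to first establish the factorization $W_n=W_{n-1}W_{n-2}\cdots W_{n-k}$ for large $n$ (apply $\varphi_k$ to the same identity one index lower), and then read the length recurrence off directly; this carries the same boundary checks. I expect the author to take one of these two routes, the finite boundary verification being the only nontrivial ingredient.
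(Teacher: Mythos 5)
The paper contains no proof of this lemma at all: it is stated as immediate, and the very next lemma (the factorization $W_n=W_{n-1}\dots W_{n-k}$) is likewise left unproved, so there is no argument of the paper's to match yours against. Judged on its own, your letter-counting induction is correct and is the standard argument: $c_0(n)=\abs{W_{n-1}}$ because every letter's image under $\varphi_k$ contains exactly one $\mathtt{a}_0$, $c_i(n)=c_{i-1}(n-1)$ for $i\ge 1$ because $\mathtt{a}_i$ arises only from $\varphi_k(\mathtt{a}_{i-1})$, and the unwinding is legitimate once $n\ge 2k-1$, with a finite check below that threshold; your ``alternative route'' via the factorization is exactly the lemma the paper states next. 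One point needs fixing, though it is arguably the paper's slip rather than yours: with the definitions as printed, $w_n=0$ for all $n\le k-1$, while $\abs{W_{k-1}}=\abs{\mathtt{a}_{k-1}}=1$, so your base-case claim that ``at $n=k-1$ both equal $1$'' is not literally true, and the routine finite check you defer would actually expose a mismatch — the two sequences share the recurrence but have shifted initial windows, giving $\abs{W_n}=w_{n+1}$ rather than $w_n$ (for $k=2$ the lengths run $0,1,1,2,3,\dots$ while $w_0,w_1,w_2,\dots=0,0,1,1,2,\dots$). Since the paper's own tables use $f_n=\abs{F_n}$ and $t_n=\abs{T_n}$, the intended normalization is clearly $w_{k-1}=1$ (zeros only for $n\le k-2$); under that reading your base cases are right and your proof goes through as written, but you should state the correction explicitly rather than assert the base case silently.
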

		\begin{lem}
			For all $k\ge 2$ and $n\ge 2k-1$, if $W_n$ is the $n$th $k$-bonacci word then
			\[
				W_n = W_{n-1}\dots W_{n-k}.
			\]
		\end{lem}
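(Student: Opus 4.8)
The plan is to argue by induction on $n$, using as the sole engine the fact that $\varphi_k(W_m)=W_{m+1}$ for every $m\ge k-1$. For $m\ge k$ this is literally the defining recursion $W_m=\varphi_k(W_{m-1})$ read at index $m+1$; for $m=k-1$ it reads $\varphi_k(\mathtt a_{k-1})=W_k$, which is the base clause $W_k=\varphi_k(W_{k-1})$. Together with the morphism property $\varphi_k(uv)=\varphi_k(u)\varphi_k(v)$, this lets one transport any concatenation identity among the $W_m$ one step up the indexing.

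The inductive step is then immediate: assuming $W_n=W_{n-1}W_{n-2}\dots W_{n-k}$ for some $n\ge 2k-1$, I would apply $\varphi_k$ and distribute over the concatenation to get $W_{n+1}=\varphi_k(W_{n-1})\varphi_k(W_{n-2})\dots\varphi_k(W_{n-k})$; since $n\ge 2k-1$ every index occurring here is at least $n-k\ge k-1$, so each factor equals $W_{m+1}$, yielding $W_{n+1}=W_nW_{n-1}\dots W_{n-k+1}$, which is the desired identity for $n+1$.

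What remains is the base case $n=2k-1$, and for this I would first establish the auxiliary identity that for $0\le j\le k-1$,
\[
	W_{k+j}=W_{k+j-1}W_{k+j-2}\dots W_{k+1}W_k\,\mathtt a_j
\]
(the product being empty when $j=0$), by a secondary induction on $j$. The case $j=0$ is $W_k=\varphi_k(\mathtt a_{k-1})=\mathtt a_0$. For the step, with $j\le k-2$, apply $\varphi_k$ to the identity for $j$, using $\varphi_k(W_m)=W_{m+1}$ on the factors $W_k,\dots,W_{k+j-1}$, the rule $\varphi_k(\mathtt a_j)=\mathtt a_0\mathtt a_{j+1}$ (valid precisely because $j\le k-2$), and $\mathtt a_0=W_k$, to reach the identity for $j+1$. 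Taking $j=k-1$ and recalling $\mathtt a_{k-1}=W_{k-1}$ gives $W_{2k-1}=W_{2k-2}W_{2k-3}\dots W_kW_{k-1}$, which is exactly the claim for $n=2k-1$.

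The only real obstacle is spotting the correct auxiliary statement for the short words $W_k,\dots,W_{2k-1}$ and the exact range $0\le j\le k-1$ over which the ``append a single letter'' behaviour persists; once that is in hand, every line is a one-step application of the morphism property. I would also remark that this is precisely why the hypothesis must be $n\ge 2k-1$ rather than $n\ge k$: the append-one-letter behaviour of the short words only settles into the clean $k$-term recursion at index $2k-1$, and one checks directly that the identity fails for $k\le n<2k-1$.
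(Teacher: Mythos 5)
Your argument is correct, and it is complete: the paper states this lemma without proof, so there is no argument of the author's to compare against. Your auxiliary identity $W_{k+j}=W_{k+j-1}\cdots W_{k+1}W_k\,\mathtt{a}_j$ for $0\le j\le k-1$ checks out (the case $j=k-1$ together with $\mathtt{a}_{k-1}=W_{k-1}$ gives exactly the base case $n=2k-1$), and in the main induction the bound $n-k\ge k-1$ is precisely what is needed so that $\varphi_k(W_m)=W_{m+1}$ applies to every factor; this also explains the hypothesis $n\ge 2k-1$, consistent with the paper's footnote that the identity fails at $\tilde T_1$ (i.e.\ $n=2k-2$ for $k=3$).
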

		We then say Fibonacci for 2-bonacci and Tribonacci for 3-bonacci.
		Thus the finite Tribonacci words $T_n$ are defined by 
		\begin{eqnarray*}
			T_0&=&\varepsilon\\
			T_1&=&\varepsilon\\
			T_2&=&\mt{2}\\
			T_3&=&\mt{0}\\
			T_4&=&\mt{01}\\
			T_n &=& T_{n-1}T_{n-2}T_{n-3},\quad n\ge 5,
		\end{eqnarray*}
		and the finite Fibonacci words $F_n$ by
		\begin{eqnarray*}
			F_0&=&\varepsilon\\
			F_1&=&\mt{1}\\
			F_2&=&\mt{0}\\
			F_n &=& F_{n-1}F_{n-2},\quad n\ge 3.
		\end{eqnarray*}

	\begin{df}
		The infinite $k$-bonacci word $\mathbf{f}^{(k)}$ is the fixed point $\varphi_k^{(\infty)}(\mt 0)$ of the morphism $\varphi_k$.
	\end{df}
	Thus, the infinite Tribonacci word
	\[
		T=\mathbf{f}^{(3)}=\mt{0102010010201}\dots
	\]
	is a fixed point of the morphism $\mt{0}\to \mt{01}$, $\mt{1}\to \mt{02}$, $\mt{2}\to \mt{0}$.
	It is a variant of the Fibonacci word obtained from the morphism $\mt{0}\to \mt{01}, \mt{1}\to \mt{0}$.

	\section{Lower bounds from critical exponents}
		A word is square-free if it does not contain any subword of the form $xx$ (denoted $x^2$), $\abs{x}>0$.
		Shallit and Wang showed that a square-free word has high automatic complexity, and we shall show that integrality of powers is
		not crucial: that is, we shall use critical exponents.
		\begin{df}\label{critExp}
			Let $\mathbf{w}$ be an infinite word over the alphabet $\Sigma$, and let $x$ be a finite word over $\Sigma$.
			Let $\alpha>0$ be a rational number.
			The word $x$ is said to occur in $\mathbf{w}$ with exponent $\alpha$ if
			there is a subword $y$ of $\mathbf{w}$ with $y = x^a x_0$ where
			$x_0$ is a prefix of $x$,
			$a$ is the integer part of $\alpha$, and
			$\abs{y}=\alpha \abs{x}$. We say that $y$ is an  $\alpha$-power.
			The word $\mathbf{w}$ is  $\alpha$-power-free if it contains no subwords which are  $\alpha$-powers.

			The critical exponent for $\mathbf{w}$ is
			the supremum of the  $\alpha$ for which $\mathbf{w}$ has  $\alpha$-powers, or equivalently
			the infimum of the  $\alpha$ for which $\mathbf{w}$ is  $\alpha$-power-free.
		\end{df}
		\begin{df}\label{rate}
			Fix a finite alphabet $\Sigma$.
			For an infinite word $\mathbf{w}\in\Sigma^{\infty}$, let $\mathbf{w}\restrict n$ denote the prefix of $\mathbf{w}$ of length $n$.
			Let $C:\Sigma^*\to\mathbb N$.
			The \emph{lower $C$-complexity rate} of $\mathbf{w}$ is
			\[
				\underline{C}(\mathbf w)=\liminf_{n\to\infty}\frac{C(\mathbf{w}\restrict n)}n.
			\]
			The \emph{upper $C$-complexity rate} of $\mathbf{w}$ is
			\[
				\underline{C}(\mathbf w)=\limsup_{n\to\infty}\frac{C(\mathbf{w}\restrict n)}n.
			\]
			If these are equal we may speak simply of the \emph{$C$-complexity rate}.
			In the case where $C=\mathrm{A_N}$ we may speak of \emph{automatic complexity rate}.
		\end{df}
		\begin{df}[Fibonacci constant]
			Let $\phi=\frac{1+\sqrt 5}2$, the positive root of $\phi^2=\phi+1$.
		\end{df}
		\begin{df}[Tribonacci constant]\label{tribConst}
			Let
			\[
				\xi = \frac13\left(
					1+\sqrt[3]{19-3\sqrt{33}}
					 +\sqrt[3]{19+3\sqrt{33}}
				\right).
			\]
		\end{df}
		\begin{thm}[{\cite{MR1170322}}]\label{three6}
			The critical exponent of the infinite Fibonacci word $\mathbf{f}^{(2)}$ is $2+\phi\approx 3.6$.
		\end{thm}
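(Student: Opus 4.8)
The critical exponent is a supremum, so it suffices to carry out two steps: (i) produce factors of $\mathbf{f}^{(2)}$ whose exponents tend to $2+\phi$ from below, and (ii) show that no factor of $\mathbf{f}^{(2)}$ has exponent $\ge 2+\phi$. Both steps rest on the self-similarity $\varphi(\mathbf{f}^{(2)})=\mathbf{f}^{(2)}$ and on the arithmetic $\abs{F_{n+1}}/\abs{F_n}\to\phi$, so that $\abs{F_{n+2}}/\abs{F_n}\to\phi^{2}=\phi+1$.

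For (i) I would exhibit, for each large $n$, an explicit $\abs{F_n}$-periodic factor. The engine is the classical near-commutation property: for the two distinct letters of the alphabet, $F_{n-1}F_{n-2}$ and $F_{n-2}F_{n-1}$ share a common prefix of length $\abs{F_n}-2$ and differ only by transposing their last two letters. Unwinding the recursion $F_m=F_{m-1}F_{m-2}$ a few times past the prefix $F_{n+1}$ of $\mathbf{f}^{(2)}$ shows that $\mathbf{f}^{(2)}$ reads, starting at position $\abs{F_{n+1}}$, as $F_n\,F_n\,F_n\,x_0$, where $x_0$ is the prefix of $F_n$ of length $\abs{F_{n-1}}-2$ — the ``$-2$'' being exactly the place where the near-commutation makes the period break. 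This factor has length $3\abs{F_n}+\abs{F_{n-1}}-2=\abs{F_{n+2}}+\abs{F_n}-2$ and period $\abs{F_n}$, hence exponent $(\abs{F_{n+2}}+\abs{F_n}-2)/\abs{F_n}\to\phi^{2}+1=2+\phi$. Thus the critical exponent is at least $2+\phi$.

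For (ii) I would argue by desubstitution. Since $\varphi$ is recognizable on $\mathbf{f}^{(2)}$ (no sufficiently long factor admits two distinct $\varphi$-parsings), every factor $w$ of $\mathbf{f}^{(2)}$ longer than a fixed threshold is uniquely $w=s\,\varphi(u)\,t$ with $u$ a factor of $\mathbf{f}^{(2)}$ and $\abs{s},\abs{t}\le 1$. If $w$ has period $p$ and $\abs{w}$ is large relative to $p$, then $\varphi(u)$ inherits the period $p$, and recognizability forces that period to respect the $\varphi$-cutting, so $u$ has a period $q$ with $p=\abs{\varphi(u\restrict q)}=q+\#_{\mt 0}(u\restrict q)$; the balance property of $\mathbf{f}^{(2)}$ (it is Sturmian) then gives $p=q\phi+O(1)$, and likewise $\abs{w}=\abs{u}\phi+O(1)$, so the exponent $\abs{w}/p$ equals $\abs{u}/q+O(1/q)$. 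Iterating this reduction drives $(w,p)$ down to a bounded family of short factors, whose exponents are checked directly to lie below $2+\phi$.

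The main obstacle is the last clause: the additive $O(1)$ errors — the fractional parts hidden in $\#_{\mt 0}(u\restrict q)$, the $\abs{s},\abs{t}\le 1$ corrections, and above all the ``$-2$'' already seen in (i) — must be propagated sharply, so that the bound surviving to the limit is $2+\phi$ and not merely $2+\phi+\varepsilon$. In practice this means carrying a Zeckendorf/Ostrowski bookkeeping of the period $p$ through the desubstitution, or, equivalently, analyzing precisely how the last-two-letters transposition of the near-commutation lemma propagates under iterated desubstitution; this is the technical heart of the argument. (One could instead replace all of (ii) by the known continued-fraction formula for the critical exponent of a Sturmian word, specialized to the slope $[0;\overline 1]$, but that only relocates the same difficulty.)
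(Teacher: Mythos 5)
The paper does not prove Theorem~\ref{three6}; it quotes it from Mignosi and Pirillo~\cite{MR1170322}, so there is no internal proof to compare yours against --- the benchmark is their argument (or the later index/critical-exponent formulas for Sturmian words). Your step (i) is essentially their lower-bound construction and is correct: immediately after the prefix $F_{n+1}$ the word does read $F_nF_nF_nx_0$ with $x_0$ the prefix of $F_n$ of length $\abs{F_{n-1}}-2$ (for $n=5$ one finds $(\mt{01001})^3\mt{0}$, exponent $16/5$, starting right after the prefix of length $8$), and the exponents $3+(\abs{F_{n-1}}-2)/\abs{F_n}$ tend to $3+1/\phi=2+\phi$. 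Written out with the near-commutation lemma this half is fine.

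The genuine gap is in step (ii), and it is exactly the point you flag yourself. The desubstitution reduction does not preserve exponents up to negligible errors: in the example above the maximal repetitions along the reduction have exponents $16/5\to 3\to 5/2$, so the boundary corrections (the trimmed letters $s,t$ and the ``$-2$'' of near-commutation) are precisely what build the constant $2+\phi$ out of the exponent $5/2$ of the short factors. Consequently the scheme ``reduce to a bounded family of short factors, check those, and absorb $O(1/q)$ errors'' cannot yield the sharp constant: summing unquantified $O(1/q)$ corrections over the iteration only gives $2+\phi+O(1)$, i.e.\ something like the $4$th-power-freeness bound, not $2+\phi$. What must actually be proved (and what Mignosi--Pirillo in effect establish) is an exact statement, e.g.\ that every square factor of $\mathbf{f}^{(2)}$ has period some $f_n$ and that the maximal exponent attainable with period $f_n$ is exactly $3+(f_{n-1}-2)/f_n$; equivalently, your Zeckendorf-style bookkeeping has to be carried out with exact constants rather than $O(\cdot)$ estimates. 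As it stands, your proposal proves the lower bound $\ge 2+\phi$ and gives a plausible program, but not a proof, for the upper bound; citing the known result (as the paper does) or the Sturmian index formula would be needed to close it.
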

		
		Tan and Wen~\cite{MR2339496} studied critical exponents, calling them \emph{free indices}.
		\begin{thm}[Tan and Wen {\cite[Theorem 4.5]{MR2339496}}]\label{critTrib}
			The critical exponent of the infinite Tribonacci word $\mathbf{f}^{(3)}$
			is $3+\frac12(\theta^2+\theta^4)$,
			where
			\begin{eqnarray*}
				\theta&=& \frac13 \left(-1 - \frac2{(17 + 3 \sqrt{33})^{1/3}} + (17 + 3 \sqrt{33})^{1/3}\right)\\
				&\approx& 0.543689012692076\dots
			\end{eqnarray*}
			is the unique real root of the equation $\theta^3+\theta^2+\theta=1$.
		\end{thm}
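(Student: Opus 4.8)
The plan is to study the repetitions of the infinite Tribonacci word $\mathbf{f}^{(3)}$ directly, exploiting the self-similarity supplied by the morphism $\varphi=\varphi_3$ ($\mt 0\mapsto\mt{01}$, $\mt 1\mapsto\mt{02}$, $\mt 2\mapsto\mt 0$). By Definition~\ref{critExp} we must compute the supremum of $\alpha=\abs{y}/\abs{x}$ over all $\alpha$-powers $y=x^ax_0$ occurring in $\mathbf{f}^{(3)}$. The incidence matrix of $\varphi$ has Perron root $\xi$ (Definition~\ref{tribConst}), and the letter frequencies of $\mt 0,\mt 1,\mt 2$ are $\theta,\theta^2,\theta^3$, where $\theta=1/\xi$ is the root of $\theta^3+\theta^2+\theta=1$ from Theorem~\ref{critTrib}; concretely,
\[
	\abs{\varphi(v)}=2\abs{v}-\abs{v}_{\mt 2}
\]
for every factor $v$, so $\abs{\varphi(v)}=\xi\abs{v}+o(\abs{v})$ but \emph{not} exactly $\xi\abs{v}$ — this mild non-uniformity, coming from $\abs{\varphi(\mt 2)}=1$, is what forces the answer to involve $\theta^2+\theta^4$ rather than a rational number.

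First I would prove a \emph{desubstitution lemma}: since $\varphi$ is primitive, injective and recognizable (circular) on $\mathbf{f}^{(3)}$, every factor $y$ of $\mathbf{f}^{(3)}$ with $\abs{y}$ above a fixed threshold has a unique decomposition $y=s\,\varphi(v)\,t$ with $v$ a factor of $\mathbf{f}^{(3)}$, $s$ a proper suffix of some $\varphi(\mt a_i)$, and $t$ a proper prefix of some $\varphi(\mt a_j)$, hence $\abs{s},\abs{t}\le 1$. One then checks that if $y$ has least period $p$ then $v$ has a least period $q$ with $p=\abs{\varphi(u)}+\delta$, $\abs{\delta}\le 2$, where $u$ is the length-$q$ prefix of $v$; in words, a long repetition lifts through $\varphi$ to a repetition of essentially the same exponent. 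Iterating the lemma, every repetition of $\mathbf{f}^{(3)}$ is, up to bounded truncation at each of finitely many levels, the $\varphi^k$-image of a repetition of bounded length, and the possible bottom-level repetitions come from a finite list obtained by inspecting all short factors of $\mathbf{f}^{(3)}$.

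Running the recursion in the forward direction then both identifies the extremal family and yields the value. From each short seed repetition, applying $\varphi$ repeatedly produces a sequence of repetitions whose period $p_k$ and whose ``overhang'' $\abs{y_k}-3p_k$ transform by rules governed by the incidence matrix; one shows $p_k\to\infty$ and that the corrected exponent $3+(\abs{y_k}-3p_k)/p_k=\abs{y_k}/p_k$ converges, with limit exactly $3+\tfrac12(\theta^2+\theta^4)$. The $3$ reflects that the longest repetitions of $\mathbf{f}^{(3)}$ are ``near-cubes'' tied to the relation $T_n=T_{n-1}T_{n-2}T_{n-3}$, while the fractional part is the relative length of the limiting overhang prefix $x_0$, read off from the frequency vector $(\theta,\theta^2,\theta^3)$ together with $\theta^3=1-\theta-\theta^2$; this gives the lower bound. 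The matching upper bound is the same recursion read the other way: any repetition that is not a $\varphi^k$-image of a member of the extremal seed family either already has exponent below $3+\tfrac12(\theta^2+\theta^4)$ at the bottom level or strictly loses exponent upon desubstitution, so it cannot exceed the claimed supremum.

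The main obstacle is the bookkeeping in the desubstitution step. Because $\abs{\varphi(v)}$ is $2\abs{v}-\abs{v}_{\mt 2}$ rather than literally $\xi\abs{v}$, one cannot simply assert that exponents are preserved; one must track the number of $\mt 2$'s inside a period and inside the overhang, control the boundary pieces $s,t$ so that minimality of the period is neither created nor destroyed by $\varphi$ (a spurious shorter period is the real danger), and verify that the finite bottom-level search is genuinely exhaustive. This is precisely the technical core of Tan and Wen's argument; a convenient way to organize it is via the singular words of $\mathbf{f}^{(3)}$ — the self-bordered conjugates of its prefixes — whose return-word structure makes the lifting of periods through $\varphi$ transparent and reduces the whole problem to the convergence of a single explicit sequence of exponents.
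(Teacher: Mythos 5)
The first thing to note is that the paper contains no proof of this statement: Theorem~\ref{critTrib} is imported verbatim from Tan and Wen \cite{MR2339496}, and the only justification in the paper is the citation (the paper then only massages the value into Lemma~\ref{realZero} by computer algebra). So your sketch is not competing with an internal argument; it is an attempted reconstruction of the cited proof, and in outline it does follow that route (primitivity and recognizability of $\varphi_3$, desubstitution of long repetitions, reduction to finitely many seed repetitions, singular-word/return-word bookkeeping). The elementary facts you state along the way are correct: $\abs{\varphi(v)}=2\abs{v}-\abs{v}_{\mt 2}$, the letter frequencies $\theta,\theta^2,\theta^3$ with $\theta^3+\theta^2+\theta=1$, and $\abs{s},\abs{t}\le 1$ in the desubstitution decomposition.

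As a proof, however, the proposal has a genuine gap exactly at the step that produces the stated number. The period-lifting claim ($p=\abs{\varphi(u)}+\delta$ with $\abs{\delta}\le 2$, and no spurious shorter period created or destroyed) is asserted rather than proven, and you yourself flag it as ``the technical core of Tan and Wen's argument''; deferring the core to the very reference being reproved leaves nothing established. More importantly, the quantitative conclusion is never derived: the identification of the extremal family of near-cubes and the computation showing that the exponents $\abs{y_k}/p_k$ converge to exactly $3+\tfrac12(\theta^2+\theta^4)$ (a supremum that is approached, not attained) is replaced by the assertion that the limiting overhang has relative length $\tfrac12(\theta^2+\theta^4)$ ``read off from the frequency vector'' --- that is a statement of the answer, not an argument for it, and the factor $\tfrac12$ in particular comes from nowhere in your sketch. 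Likewise the exhaustiveness of the bottom-level finite search is claimed but not exhibited. Within this paper nothing beyond the citation is needed; if you want a standalone proof you must actually prove the desubstitution/period lemma, enumerate the short seeds, and run the lifting recursion (most cleanly organized around the singular words, as in Tan and Wen) far enough to compute the limiting exponent explicitly.
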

		\begin{lem}\label{realZero}
			The critical exponent of $\mathbf{f}^{(3)}$ is the real zero
			\begin{eqnarray*}
				&&2 + \frac16 \sqrt[3]{54 - 6 \sqrt{33}} + \frac{\sqrt[3]{9 + \sqrt{33}}}{6^{2/3}}\\
				&=& 3.19148788\dots
			\end{eqnarray*}
			of the polynomial $2x^3 - 12x^2 + 22x - 13$.
		\end{lem}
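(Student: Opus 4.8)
The plan is to start from the Tan--Wen expression of Theorem~\ref{critTrib} and eliminate $\theta$ by elementary algebra, then extract the radical form via Cardano's formula. Throughout, write $x$ for the critical exponent and recall that $\theta$ is the unique real root of $\theta^3+\theta^2+\theta-1=0$.

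First I would reduce powers of $\theta$ modulo its minimal polynomial. From $\theta^3=1-\theta-\theta^2$ one computes $\theta^4=\theta\cdot\theta^3=\theta(1-\theta-\theta^2)=2\theta-1$, so
\[
x \;=\; 3+\tfrac12\bigl(\theta^2+\theta^4\bigr)\;=\;3+\tfrac12\bigl(\theta^2+2\theta-1\bigr),
\]
which rearranges to $(\theta+1)^2=2(x-2)$. Since $\theta>0$ we get $\theta=-1+\sqrt{2(x-2)}$; set $s=\sqrt{2(x-2)}$, so that $s^2=2x-4$ and $\theta=s-1$.

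Next I would substitute $\theta=s-1$ into $\theta^3+\theta^2+\theta-1=0$, obtaining $s^3-2s^2+2s-2=0$. Replacing $s^3=s\cdot s^2=s(2x-4)$ and solving the resulting linear equation for $s$ gives $s=\dfrac{2x-3}{x-1}$; imposing $s^2=2x-4$ and clearing denominators then yields exactly $2x^3-12x^2+22x-13=0$. A quick sign check confirms this cubic is negative at $x=3$ (value $-1$) and positive at $x=4$ (value $11$), and its discriminant is negative, so it has a unique real root, which is therefore the critical exponent $x\approx 3.19$.

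Finally, to obtain the closed form I would depress the cubic via $x=y+2$, reducing it to $2y^3-2y-1=0$, i.e.\ $y^3-y-\tfrac12=0$. Cardano's formula (valid here, since the negative discriminant means there is no casus irreducibilis) gives $y=\sqrt[3]{\tfrac14+\tfrac{\sqrt{33}}{36}}+\sqrt[3]{\tfrac14-\tfrac{\sqrt{33}}{36}}$; rewriting $\tfrac14\pm\tfrac{\sqrt{33}}{36}=\tfrac{9\pm\sqrt{33}}{36}$ and pulling the denominators into the radicals turns $x=y+2$ into the stated expression $2+\tfrac16\sqrt[3]{54-6\sqrt{33}}+\sqrt[3]{9+\sqrt{33}}/6^{2/3}$. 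The only mildly delicate point is bookkeeping with the cube-root branches so that the two cubic radicands' product equals $-p/3=\tfrac13$ and one lands on the real root rather than a spurious complex one; the negative discriminant makes the real choice of cube roots the correct one, so no obstruction arises. The numerical value $3.19148788\dots$ then serves as a final consistency check against $3+\tfrac12(\theta^2+\theta^4)$.
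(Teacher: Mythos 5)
Your derivation is correct, and it is a genuinely different (and more informative) route than the paper takes: the paper disposes of Lemma~\ref{realZero} in one line, stating that it ``follows from Theorem~\ref{critTrib} by computer software (Wolfram Alpha),'' whereas you carry out the elimination by hand. Your key steps all check out: $\theta^4=2\theta-1$ modulo $\theta^3+\theta^2+\theta-1$, hence $(\theta+1)^2=2(x-2)$; substituting $\theta=s-1$ gives $s^3-2s^2+2s-2=0$, and eliminating $s$ via $s^2=2x-4$ yields $s=(2x-3)/(x-1)$ and then exactly $2x^3-12x^2+22x-13=0$; the shift $x=y+2$ gives $y^3-y-\tfrac12=0$ with discriminant $-\tfrac{11}{4}<0$, so there is a unique real root and Cardano applies with real radicals, the radicands $\tfrac14\pm\tfrac{\sqrt{33}}{36}=\tfrac{9\pm\sqrt{33}}{36}$ (both positive, product $\tfrac1{27}=(-p/3)^3$) rewriting to the stated form $2+\tfrac16\sqrt[3]{54-6\sqrt{33}}+\sqrt[3]{9+\sqrt{33}}/6^{2/3}$. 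What your approach buys is a self-contained, human-verifiable proof that also exhibits \emph{why} the critical exponent is an algebraic number of degree $3$ and produces its minimal polynomial constructively; what the paper's approach buys is only brevity, at the cost of leaving the verification to a computer-algebra black box. The one point worth stating explicitly if this were incorporated is the justification $\theta\in(0,1)$ (immediate from $\theta^3+\theta^2+\theta=1$) used when taking the positive square root $\theta+1=\sqrt{2(x-2)}$; otherwise no gaps.
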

		Lemma~\ref{realZero} follows from Theorem~\ref{critTrib} by computer software (Wolfram Alpha).
		\begin{df}
			Let $x$ be a word of length $n$, $x=x_1,\dots,x_n$.
			Two occurrences of words $a$ (starting at position $i$) and $b$ (starting at position $j$) in a word $x$ are 
			\emph{disjoint} if $x=uavbw$ where $u,v,w$ are words and $|u|=i-1$, $|uav|=j$.
			If in addition $|v|>0$ then we say that these occurrences of $a$ and $b$ are \emph{strongly disjoint}.
		\end{df}
		\begin{thm}[{\cite{kjoshanssen2019incompressibility}}]\label{May 4, 2019}
			If the critical exponent of a word $x$ is at most $\gamma\ge 2$ then
			there is an $m\ge 0$ and a set of $m$ many strongly disjoint at-least-square powers in $x$ with
			$\mathrm{A_N}(x)\ge \frac{n+1-m}{\gamma}$.
		\end{thm}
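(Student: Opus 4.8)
The plan is to reason directly about a witnessing automaton. Let $q=\mathrm{A_N}(x)$ and fix an NFA $M$ with $q$ states for which $x$ is the unique accepted word of length $n$; write $\pi\colon s_0\xrightarrow{x_1}s_1\to\cdots\xrightarrow{x_n}s_n$ for the unique accepting path. The engine of the argument is a \emph{re-routing principle}: if a state $s$ occurs along $\pi$ at positions $p_0<p_1<\cdots<p_r$, then $\pi$ decomposes there into $r$ nonempty loops based at $s$, with labels $y_1,\dots,y_r$; swapping two adjacent loops yields another accepting path of the same length $n$, so by uniqueness $y_jy_{j+1}=y_{j+1}y_j$ for each $j$, and Fine--Wilf then forces all the $y_j$ to be powers of a single primitive word $z_s$. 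Hence $x_{p_0+1}\cdots x_{p_r}=z_s^{N_s}$ with integer exponent $N_s\ge r$, which for $r\ge2$ is an at-least-square power. Being an $N_s$-power occurring in $x$, it satisfies $N_s\le\gamma$ by Definition~\ref{critExp} together with the hypothesis on the critical exponent; combined with $r\le N_s$ this caps the number of occurrences of every state by $\gamma+1$.

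Second, I would run the counting. Every one of the $n+1$ positions of $\pi$ carries one of the $q$ states, so $n+1=\sum_s(\text{number of occurrences of }s)$. A state occurring at most twice contributes at most $2\le\gamma$ (using $\gamma\ge2$); a state occurring $r+1\ge3$ times contributes $r+1\le N_s+1\le\gamma+1$, i.e.\ $\gamma$ plus one unit to be paid for with a square. Summing over the $q$ states gives $n+1\le\gamma q+h$, where $h$ is the number of \emph{heavy} states, those occurring at least three times. Thus it suffices to exhibit $h$ strongly disjoint at-least-square powers in $x$: taking $m$ to be this number yields $m\ge n+1-\gamma q$, that is, $\mathrm{A_N}(x)\ge(n+1-m)/\gamma$, as claimed (and if $h=0$ then $n+1\le\gamma q$ and $m=0$ works).

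Third — and this is where the real work lies — I would extract those powers. Each heavy state $s$ already hands us the power $z_s^{N_s}=x_{p_0+1}\cdots x_{p_r}$ spanning its first and last occurrences, but for distinct heavy states these intervals may overlap or nest, so a naive choice need not be strongly disjoint. Periodicity comes to the rescue again: two overlapping such powers whose overlap is long compared with $\abs{z_s}+\abs{z_t}$ must, by Fine--Wilf, share a common primitive root and so fuse into a single longer power, while a configuration in which two heavy loops genuinely cross is handled by a second re-routing — reversing the order of the three crossing segments forces an identity $uvw=wvu$, whence $vw$ is an at-least-square. The main obstacle, I expect, is the bookkeeping that guarantees exactly $h$ strongly disjoint at-least-squares survive all this fusing, i.e.\ that no unit of credit is lost when several heavy states collapse into one power region whose exponent is already bounded by $\gamma$; once that is in hand, the inequality of the second paragraph closes the proof.
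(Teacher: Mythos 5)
Your first two steps are sound: the loop-swapping/uniqueness argument does force all loops at a repeated state to be powers of a common primitive word, and the resulting per-state cap (at most $\gamma+1$ visits for a state seen three or more times, at most $2\le\gamma$ otherwise) correctly gives $n+1\le\gamma q+h$ with $h$ the number of heavy states. The genuine gap is in your third step, and it is not mere bookkeeping: the plan requires exhibiting $h$ strongly disjoint at-least-square powers, one per heavy state, and that is impossible in general, because every state lying on a single repeatedly-traversed cycle is heavy while the whole cycle yields essentially one power. The paper's own Figure~\ref{japan} is a concrete illustration: the states $q_9,q_\mathrm{A},\dots,q_\mathrm{H}$ (and likewise $q_0,\dots,q_3$) are each visited three times, giving a dozen-plus heavy states, yet all the squares they generate are overlapping conjugates of just two powers, so after your ``fusing'' one has $m$ far smaller than $h$. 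Once $m<h$, the inequality $n+1\le\gamma q+h$ does not give $n+1\le\gamma q+m$, so the proof does not close; you flag exactly this as the remaining obstacle, but it is the central difficulty of the theorem rather than a residual detail.

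The repair is to do the counting per power region (per cycle) instead of per heavy state. A region of the path carrying the power $x_i^{\alpha_i}$ (with $x_i$ primitive, $\alpha_i\ge 2$) occupies at most $\alpha_i\abs{x_i}+1\le\gamma\abs{x_i}+1$ positions of the accepting path while containing at least $\abs{x_i}$ distinct states (primitivity rules out shorter subloops, and distinct regions cannot share states without merging, by the same commutation argument); states outside all regions are visited at most twice. Hence the excess over $\gamma$ visits per state is at most one unit \emph{per region}, giving $n+1\le\gamma q+m$ where $m$ is the number of regions, and these regions are precisely the strongly disjoint at-least-square powers required. This per-region accounting is what the cited source establishes; it is exactly the fundamental inequality $2q\ge n+1-m-\sum_i(\alpha_i-2)\abs{x_i}$ of \cite[Theorem 19]{kjoshanssen2019incompressibility} quoted in the paper's definition of $\mathrm{A_N^{lower}}$ (inequality (\ref{fundamental})), from which Theorem~\ref{May 4, 2019} follows using $\alpha_i\le\gamma$ together with the fact that the cycles contribute at least $\sum_i\abs{x_i}$ distinct states, so $\sum_i(\alpha_i-2)\abs{x_i}\le(\gamma-2)q$. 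Your per-state inequality would need to be replaced by this argument for the proof to go through.
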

		\begin{thm}\label{September 6, 2019}
			If the critical exponent of a word $x$ is at most $\gamma\ge 2$ then
			$\mathrm{A_N}(x)\ge \frac{n+1-\sqrt{2n}}{\gamma}$.
		\end{thm}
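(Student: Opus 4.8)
The plan is to derive this as a clean corollary of Theorem~\ref{May 4, 2019}. That theorem, applied to $x$ with the given critical-exponent bound, produces some integer $m\ge 0$ and a set of $m$ strongly disjoint at-least-square powers in $x$ with $\mathrm{A_N}(x)\ge\frac{n+1-m}{\gamma}$, where $n=\abs{x}$. Since the right-hand side is a decreasing function of $m$, it suffices to prove the purely combinatorial fact that any such $m$ satisfies $m\le\sqrt{2n}$; then $\mathrm{A_N}(x)\ge\frac{n+1-m}{\gamma}\ge\frac{n+1-\sqrt{2n}}{\gamma}$, and we are done. (The case $m=0$ is immediate, as then $\mathrm{A_N}(x)\ge\frac{n+1}{\gamma}\ge\frac{n+1-\sqrt{2n}}{\gamma}$.)

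To bound $m$, I would use that the $m$ powers furnished by Theorem~\ref{May 4, 2019} can be taken to have pairwise \emph{distinct lengths} — they come from pumping distinct cycles of a witnessing NFA, so the underlying squares have distinct (indeed even) lengths. List these lengths as $\ell_1<\ell_2<\dots<\ell_m$. Each power is at least a square, so $\ell_i\ge 2$; since the $\ell_i$ are distinct integers $\ge 2$, the $i$th smallest obeys $\ell_i\ge i+1$. The powers are strongly disjoint occurrences inside a word of length $n$, so they occupy pairwise disjoint blocks of positions and $\sum_{i=1}^m\ell_i\le n$. Hence
\[
	n\ \ge\ \sum_{i=1}^m \ell_i\ \ge\ \sum_{i=1}^m (i+1)\ =\ \frac{m(m+1)}2+m\ =\ \frac{m^2+3m}2\ \ge\ \frac{m^2}2 ,
\]
so $m\le\sqrt{2n}$, which completes the argument.

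The main obstacle is the one nontrivial input above: making precise that the $m$ powers may be chosen with distinct lengths. If Theorem~\ref{May 4, 2019} is only invoked as the black box quoted here, one has to look inside its proof, where the strongly disjoint powers arise from a cycle decomposition of the accepting path in a minimal NFA for $x$, and observe that the cycles used can be taken of pairwise distinct lengths (collapsing repeated cycle-lengths only helps). Everything else — the inequality $\ell_i\ge i+1$ and the packing estimate — is elementary, and one could even sharpen the constant slightly by charging the $m-1$ nonempty gaps between strongly disjoint occurrences, though $\sqrt{2n}$ is all that is needed downstream.
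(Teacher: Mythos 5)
Your proposal is correct and follows essentially the same route as the paper: the paper likewise notes that, by uniqueness of the accepting path, the $m$ powers from Theorem~\ref{May 4, 2019} must have pairwise distinct base lengths, so these lengths sum to at least $m(m+1)/2\le n$, giving $m\le\sqrt{2n}$ and then $\mathrm{A_N}(x)\ge\frac{n+1-m}{\gamma}\ge\frac{n+1-\sqrt{2n}}{\gamma}$. Your packing estimate via $\ell_i\ge i+1$ is just a slight variant (working with the square lengths rather than the base lengths), and your identified ``nontrivial input'' is exactly the paper's one-line appeal to uniqueness of the path.
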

		\begin{proof}
			By uniqueness of path the $m$ many powers in Theorem~\ref{May 4, 2019} must have distinct base lengths.
			Thus the base lengths add up to at least $\sum_{k=1}^m k=m(m+1)/2$, which implies $m(m+1)/2\le n$.
			Consequently $m\le\sqrt{2n}$ and
			\[
				\mathrm{A_N}(x)\ge \frac{n+1-m}{\gamma} \ge \frac{n+1-\sqrt{2n}}{\gamma}.\qedhere
			\]
		\end{proof}
		\begin{thm}
			The $\mathrm{A_N}$-complexity rate of the infinite Fibonacci word $\mathbf{f}^{(2)}$ is at least 
			\[
				\frac2{5+\sqrt 5}
				= 0.27639\dots
			\]
			The $\mathrm{A_N}$-rate of the infinite Tribonacci number $\mathbf{f}^{(3)}$ is at least 
			\[
				0.313333478\dots
			\]
			the real root of $-2 + 12 x - 22 x^2 + 13 x^3$.
		\end{thm}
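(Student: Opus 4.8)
The plan is to obtain both lower bounds as a direct consequence of Theorem~\ref{September 6, 2019}, fed with the critical exponents already recorded above. Write $\gamma_2 = 2+\phi$ for the critical exponent of $\mathbf{f}^{(2)}$ (Theorem~\ref{three6}) and $\gamma_3 = 3.19148788\dots$ for that of $\mathbf{f}^{(3)}$, the real zero of $2x^3-12x^2+22x-13$ (Lemma~\ref{realZero}); note $\gamma_k\ge 2$ in each case. The first observation I would make is that any power occurring in a prefix $\mathbf{f}^{(k)}\restrict n$ also occurs in $\mathbf{f}^{(k)}$, so the critical exponent of the finite word $\mathbf{f}^{(k)}\restrict n$ is at most $\gamma_k$; thus Theorem~\ref{September 6, 2019} applies to $x=\mathbf{f}^{(k)}\restrict n$ with $\gamma=\gamma_k$ for every $n$.

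Applying it gives
\[
	\mathrm{A_N}(\mathbf{f}^{(k)}\restrict n)\ \ge\ \frac{n+1-\sqrt{2n}}{\gamma_k}
\]
for all $n$. Dividing by $n$ and letting $n\to\infty$, the contributions of $1/n$ and $\sqrt{2n}/n=\sqrt{2/n}$ vanish, so
\[
	\underline{\mathrm{A_N}}(\mathbf{f}^{(k)})=\liminf_{n\to\infty}\frac{\mathrm{A_N}(\mathbf{f}^{(k)}\restrict n)}{n}\ \ge\ \frac1{\gamma_k},
\]
which is also a lower bound for the upper rate, hence for the rate whenever it exists.

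Finally I would convert $1/\gamma_k$ into the stated closed forms. For $k=2$, substituting $\phi=\tfrac{1+\sqrt5}{2}$ gives $\gamma_2=\tfrac{5+\sqrt5}{2}$, so $1/\gamma_2=\tfrac{2}{5+\sqrt5}=0.27639\dots$. For $k=3$, since $\gamma_3$ satisfies $2\gamma_3^3-12\gamma_3^2+22\gamma_3-13=0$, dividing by $\gamma_3^3$ and setting $y=1/\gamma_3$ shows $13y^3-22y^2+12y-2=0$; that is, $1/\gamma_3$ is the real root of $-2+12x-22x^2+13x^3$, numerically $0.313333478\dots$. There is little genuine obstacle here: the mathematical content is entirely in the already-proved Theorem~\ref{September 6, 2019}, and the only points needing care are the harmless passage to prefixes and the reciprocal-polynomial identity. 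Pinning the rate down to an exact value (a matching upper bound) is the part that will require real work, to be carried out separately.
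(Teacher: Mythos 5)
Your proposal is correct and follows essentially the same route as the paper, which likewise deduces both bounds by applying Theorem~\ref{September 6, 2019} with the critical exponents from Theorems~\ref{three6} and~\ref{critTrib}; you have merely spelled out the passage to prefixes, the limit computation, and the reciprocal-polynomial identity that the paper leaves implicit.
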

		\begin{proof}
			These two facts now follow by applying Theorem~\ref{September 6, 2019} with
			Theorems~\ref{three6} and~\ref{critTrib}, respectively.
		\end{proof}

		Karhum{\"a}ki~\cite{MR690339} showed that the Fibonacci words contain no 4th power and this implies
		(the deterministic version is in Shallit and Wang 2001~\cite[Theorem 9]{MR1897300}) $\mathrm{A_N}(x)\ge (n+1)/4$.
		\begin{thm}
			The $\mathrm{A_N}$-complexity rate of the infinite $k$-bonacci word $\mathbf{f}^{(k)}$ is at least $1/4$ for any $k\ge 2$.
		\end{thm}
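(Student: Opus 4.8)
The plan is to reduce the theorem to a single combinatorial fact: for every $k\ge 2$, the infinite $k$-bonacci word $\mathbf{f}^{(k)}$ contains no fourth power. Granting this, every prefix $\mathbf{f}^{(k)}\restrict n$ is fourth-power-free, so its critical exponent is at most $4$, and Theorem~\ref{September 6, 2019} applied with $\gamma=4$ yields
\[
	\mathrm{A_N}\bigl(\mathbf{f}^{(k)}\restrict n\bigr)\;\ge\;\frac{n+1-\sqrt{2n}}{4}.
\]
Dividing by $n$ and taking the $\liminf$, the right-hand side tends to $1/4$, so $\underline{\mathrm{A_N}}(\mathbf{f}^{(k)})\ge 1/4$ and hence also $\overline{\mathrm{A_N}}(\mathbf{f}^{(k)})\ge 1/4$, which is the assertion. (One could equally quote the sharper bound $\mathrm{A_N}(x)\ge(n+1)/4$ valid for fourth-power-free $x$ and recalled just above, dropping the $\sqrt{2n}$ term.) Thus the whole weight of the proof falls on the fourth-power-freeness of $\mathbf{f}^{(k)}$, uniformly in $k$.

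For $k=2$ and $k=3$ this is already available in the excerpt: by Theorem~\ref{three6} the critical exponent of $\mathbf{f}^{(2)}$ is $2+\phi\approx 3.62<4$, and by Theorem~\ref{critTrib} (equivalently Lemma~\ref{realZero}) that of $\mathbf{f}^{(3)}$ is $\approx 3.19<4$, so neither word has a fourth power. For arbitrary $k$ I would prove the clean statement that the critical exponent of $\mathbf{f}^{(k)}$ equals $2+\tfrac{1}{\beta_k-1}$, where $\beta_k\in(1,2)$ is the dominant root of $x^k=x^{k-1}+\dots+x+1$---equivalently, the growth rate of the $k$-bonacci lengths, $\abs{W_n^{(k)}}\sim c\,\beta_k^{\,n}$. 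Since $\beta_k$ increases with $k$ and $\beta_2=\phi$ (so $\beta_2-1=1/\phi$), this quantity is largest at $k=2$, where it equals $2+\phi<4$; hence $\mathbf{f}^{(k)}$ is fourth-power-free for every $k\ge 2$ at once. A short algebraic check, using $\xi=1/\theta$ and $\theta^3+\theta^2+\theta=1$ from Theorem~\ref{critTrib}, confirms that this formula specializes correctly at $k=2,3$.

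The main obstacle is exactly this critical-exponent computation for general $k$---or, since it already suffices, just the inequality ``critical exponent $<4$''---and I would carry it out by a desubstitution argument on the morphism $\varphi_k$. Fourth powers $w^4$ with $\abs{w}\le 3$ are excluded by hand from the block structure: in $\mathbf{f}^{(k)}$ no letter other than $\mt 0$ ever occurs twice consecutively (each $\mt a_i$ with $i\ge 1$ occurs only in the context $\mt 0\mt a_i\mt 0$), and $\mt 0\mt 0\mt 0$ never occurs. For $\abs{w}$ above a bounded threshold, one uses that $\varphi_k$ is a primitive substitution with aperiodic fixed point and hence recognizable (Moss\'e), so an occurrence of $w^4$ desubstitutes into an occurrence in $\mathbf{f}^{(k)}$ of a word of essentially the same exponent but strictly smaller base length; iterating reaches a fourth power of bounded base length, a contradiction. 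The delicate point---just the kind of estimate Tan and Wen make in proving Theorem~\ref{critTrib} for $k=3$---is to control the ``overhang'' discarded at the two ends of the repetition tightly enough that the exponent never falls to or below $4$ during the descent. If one prefers to avoid this, the alternative is to cite the known critical-exponent result for $k$-bonacci (Arnoux--Rauzy) words from the literature, leaving only the routine half-page above that delivers the rate $1/4$.
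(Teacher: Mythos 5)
Your reduction is exactly the one the paper uses: fourth-power-freeness of $\mathbf{f}^{(k)}$ means every prefix has critical exponent at most $4$, and Theorem~\ref{September 6, 2019} with $\gamma=4$ gives $\mathrm{A_N}(\mathbf{f}^{(k)}\restrict n)\ge (n+1-\sqrt{2n})/4$, hence rate at least $1/4$. The difference lies entirely in how fourth-power-freeness is justified. The paper takes what you call the fallback option: it simply cites Glen~\cite{MR2331002}, who proved that the $k$-bonacci word contains no fourth power for any $k\ge 2$, and is done in one line. Your primary route is more ambitious: you propose to establish the exact critical exponent $2+\tfrac{1}{\beta_k-1}$ of $\mathbf{f}^{(k)}$ for all $k$ and deduce fourth-power-freeness from monotonicity in $k$. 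If carried out, that would buy more than the theorem asks for, namely a uniform rate bound $\ge \bigl(2+\tfrac1{\beta_k-1}\bigr)^{-1}$, specializing to the paper's separate Fibonacci ($\approx 0.276$) and Tribonacci ($\approx 0.313$) bounds. But as written this route has a genuine gap at exactly the step carrying all the weight: the formula for general $k$ is asserted, not proven (it is only checked against $k=2,3$); the desubstitution descent is an outline whose crucial overhang estimate you yourself flag as the delicate Tan--Wen-type point and do not supply; and the monotonicity comparison rests on the unproven formula. So the proposal is correct and matches the paper if you invoke the literature for power-freeness, as the paper does with Glen's theorem; the self-contained alternative would need the critical-exponent (or at least ``$<4$'') argument to be actually carried out before it counts as a proof.
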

		\begin{proof}
			Glen~\cite{MR2331002} showed that the $k$-bonacci word has no fourth power for any $k\ge 2$.
			Thus, the critical exponent is at most 4 and by Theorem~\ref{September 6, 2019} we are done.
		\end{proof}

	\section{Upper bounds from factorizations}
		There are many factorization result possible for $k$-bonacci words. Even their definitions like $F_n = F_{n-1}F_{n-2}$
		are factorizations. We shall prove some such results that help us obtain upper bounds on automatic complexity:
		Theorem~\ref{prefix} and~\ref{also-prove}.
		In the following, for convenience we renumber by defining $\tilde T_n = T_{n+3}$.
		\begin{df}
			For $n\ge 0$ and $0\le k\le n$, we let $\ci{k}_n = \tilde T_{n-k}$.
			We also write $\di{k}_n=|\ci{k}_n|$, the length of $\ci{k}_n$.
			When $n$ is understood from context we write $\ci{k} = \ci{k}_n$ and $\di{k}=\di{k}_n$.
		\end{df}

		\begin{thm}\label{prefix}
			For large enough $n$,
			$\tilde T_{n-2}^2 \prod_{k=6}^{3\lfloor n/3\rfloor+1} \tilde T_{n-k}$ is a prefix of $\tilde T_n$.
		\end{thm}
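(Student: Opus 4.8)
The plan is to prove the statement by induction on $n$ in steps of $3$, working throughout with the renormalized words $\tilde T_n = T_{n+3}$, for which the defining relation reads $\tilde T_m = \tilde T_{m-1}\tilde T_{m-2}\tilde T_{m-3}$ whenever $m\ge 2$. Write $W_n = \prod_{k=6}^{3\lfloor n/3\rfloor+1}\tilde T_{n-k}$ for the ``tail'' and $P_n = \tilde T_{n-2}^2 W_n$ for the candidate prefix, so that the goal is that $P_n$ is a prefix of $\tilde T_n$. First I would record two routine preliminaries, each by a short induction on the recurrence alone: (i) a prefix-nesting lemma, that $\tilde T_j$ is a prefix of $\tilde T_m$ whenever $0\le j\le m$ (every $\tilde T_j$ with $j\ge 0$ begins with $\mt 0$, which anchors the nesting); and (ii) the tail reduction $W_n = \tilde T_{n-5}\,W_{n-3}$ for all large $n$. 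Item (ii) is pure index bookkeeping: since $\lfloor (n-3)/3\rfloor = \lfloor n/3\rfloor - 1$, the products $W_n$ and $W_{n-3}$ end at the same index, so $W_n = \tilde T_{n-6}\tilde T_{n-7}\tilde T_{n-8}\,W_{n-3}$, and then $\tilde T_{n-5} = \tilde T_{n-6}\tilde T_{n-7}\tilde T_{n-8}$ (the recurrence, valid once $n-5\ge 2$) completes it. Combining (ii) with $P_{n-3} = \tilde T_{n-5}^2 W_{n-3}$ gives the identity $P_{n-3} = \tilde T_{n-5}\,W_n$, which will carry the inductive hypothesis.

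Next I would unwind $\tilde T_n$ twice via the recurrence: $\tilde T_n = \tilde T_{n-1}\tilde T_{n-2}\tilde T_{n-3} = \tilde T_{n-2}\tilde T_{n-3}\tilde T_{n-4}\tilde T_{n-2}\tilde T_{n-3}$. Using (i) to write $\tilde T_{n-3} = \tilde T_{n-5} R_{n-3}$, and substituting $\tilde T_{n-2} = \tilde T_{n-3}\tilde T_{n-4}\tilde T_{n-5} = \tilde T_{n-5} R_{n-3}\,\tilde T_{n-4}\tilde T_{n-5}$, one extracts a second copy of $\tilde T_{n-2}$ from the front of the remaining word and obtains, by a direct computation,
\[
	\tilde T_n = \tilde T_{n-2}^2\; R_{n-3}\; \tilde T_{n-4}\; \tilde T_{n-5}\; \tilde T_{n-3}.
\]
Consequently it suffices to prove that $W_n$ is a prefix of $R_{n-3}$: from that, $P_n = \tilde T_{n-2}^2 W_n$ is a prefix of $\tilde T_{n-2}^2 R_{n-3}$, which is a prefix of $\tilde T_n$.

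This last reduction is exactly where the induction hypothesis enters. Assuming $P_{n-3}$ is a prefix of $\tilde T_{n-3}$, I rewrite the hypothesis using $P_{n-3} = \tilde T_{n-5} W_n$ and $\tilde T_{n-3} = \tilde T_{n-5} R_{n-3}$; cancelling the common prefix $\tilde T_{n-5}$ yields that $W_n$ is a prefix of $R_{n-3}$, as needed, so $P_n$ is a prefix of $\tilde T_n$. The base of the induction is the direct verification that $P_m$ is a prefix of $\tilde T_m$ for three consecutive values of $m$ (e.g.\ $m = 9,10,11$), a finite check, and every step above is valid once $n$ exceeds a small explicit threshold, which is all the phrase ``large enough $n$'' requires. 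The main obstacle I anticipate is not any one computation but the combined bookkeeping of the nested factorizations together with the floor-function arithmetic: in particular, the reduction (ii) and the identity $P_{n-3} = \tilde T_{n-5} W_n$ must be stated with their correct range of validity, since for small $n$ the tail $W_n$ is too short and can reach the anomalous values $\tilde T_{-1} = \mt 2$ and $\tilde T_{-2} = \tilde T_{-3} = \varepsilon$, where these identities genuinely fail and must be excluded.
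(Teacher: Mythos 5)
Your proof is correct, but it runs a genuinely different induction than the paper's. The paper fixes $n$ and inducts on the number of factors in the product: it establishes the exact factorization (\ref{circ}), $\ci{0}=\ci{2}^2\bigl(\prod_{k=6}^{3m+1}\ci{k}\bigr)\ci{3m-1}(\cdots)\ci{3}$, carrying an explicitly tracked remainder, and the inductive step expands the single factor $\ci{3m-1}$ to push the product from $m$ to $m+1$ (valid while $3(m+1)\le n$), so the prefix claim falls out with $m=\lfloor n/3\rfloor$ and no separate base-case computation beyond the displayed expansions. You instead induct on $n$ in steps of three: your identity $\tilde T_n=\tilde T_{n-2}^2\,R_{n-3}\,\tilde T_{n-4}\tilde T_{n-5}\tilde T_{n-3}$ (which I checked: both sides reduce to $\tilde T_{n-2}\tilde T_{n-3}\tilde T_{n-4}\tilde T_{n-3}\tilde T_{n-4}\tilde T_{n-5}\tilde T_{n-3}$), together with the index bookkeeping $W_n=\tilde T_{n-5}W_{n-3}$ and hence $P_{n-3}=\tilde T_{n-5}W_n$, lets you cancel the common prefix $\tilde T_{n-5}$ against the hypothesis $P_{n-3}$ prefix of $\tilde T_{n-3}=\tilde T_{n-5}R_{n-3}$ and transfer the prefix property up to $n$; the prefix-nesting lemma and the three base cases (a finite check, and e.g.\ $n=9,10,11$ do hold --- they are instances of the paper's displayed $m=3$ line, valid once $n\ge 9$) complete it. What each buys: the paper's argument yields a complete factorization of $\tilde T_n$, not just a prefix statement, at the cost of heavier remainder bookkeeping (note the second product in (\ref{circ}) is printed with lower limit $M=0$, while the explicit $m=4$ line stops at $M=2$); your argument avoids tracking the remainder entirely and keeps every step to a one-line cancellation, at the cost of an explicit finite base-case verification and the mild care you already flag about the degenerate values $\tilde T_{-1}=\mt{2}$, $\tilde T_{-2}=\tilde T_{-3}=\varepsilon$ reached by the tail of $W_n$.
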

		\begin{proof}
			Note that the equation $\tilde T_n=\tilde T_{n-1}\tilde T_{n-2}\tilde T_{n-3}$ holds for $n\ge 2$.\footnote{
				It does not hold for $n=1$: we have
				\(
					\tilde T_2=(\mt{01})(\mt{0})(\mt{2})=\tilde T_1\tilde T_0\tilde T_{-1},
				\)
				but
				\(
					\tilde T_1=\mt{01}\ne (\mt{0})(\mt{2})() = \tilde T_{0}\tilde T_{-1}\tilde T_{-2}.
				\)
			}
			Thus we can write
			\begin{equation}\label{limiting}
				\ci{n-2}=\ci{n-1}\ci{n}\ci{n+1}
			\end{equation}
			but we cannot expand $\ci{n-1}$.
			The idea now is to use a loop of length 13 followed by one of length 6:
			\begin{eqnarray*}
				\tilde T_n &=& \ci{1} \ci{2} \ci{3} = (\ci{2}\ci{3}\ci{4}) \ci{2}\ci{3}
				= \ci{2}\ci{3}\ci{4}(\ci{3}\ci{4}\ci{5})\ci{3}
			\\
				&=&\ci{2}\ci{3}\ci{4}(\ci{4}\ci{5}\ci{6})\ci{4}\ci{5}\ci{3}
				=\ci{2}\ci{3}\ci{4}(\ci{5}\ci{6}\ci{7})\ci{5}\ci{6}\ci{4}\ci{5}\ci{3}
			\\
				&=&\ci{2}^2\ci{6}\ci{7}\enskip \ci{5}\enskip\ci{6}\ci{4}\ci{5}\enskip\ci{3}
			\\
				&=&\ci{2}^2\ci{6}\ci{7}\enskip (\ci{8}\ci{9}\ci{10} \ci{8}\ci{9} \ci{7}\ci{8})\enskip \ci{6}\ci{4}\ci{5}\enskip\ci{3}
			\\
				&=&\ci{2}^2\left(\prod_{k=6}^{10}\ci{k}\right) \ci{8}\enskip\ci{9} \ci{7}\ci{8}\enskip \ci{6}\ci{4}\ci{5}\enskip\ci{3}
			\\
				&=&\ci{2}^2\left(\prod_{k=6}^{13}\ci{k}\right)\enskip
				\ci{11}\ci{12}\ci{10}\ci{11} \ci{9} \ci{7}\ci{8} \ci{6}\ci{4}\ci{5}\ci{3}.
			\end{eqnarray*}
			Thus for $m=4$ we have
			\begin{equation}\label{circ}
				\ci{0}=\ci{2}^2 \left(\prod_{k=6}^{3m+1}\ci{k}\right) \ci{3m-1}\left(\prod_{M=m}^0 \ci{3M}\ci{3M-2}\ci{3M-1}\right) \ci{3}.
			\end{equation}
			Here we use the notation $\prod_{M=m}^0 a_M = a_M a_{M-1}\dots a_0$.
			To prove (\ref{circ}) for $m\ge 4$ by induction we expand:
			\[
				\ci{3m-1}=\ci{3m+2}\ci{3m+3}\ci{3m+4}\ci{3m+2}\ci{3m+3}\ci{3m+1}\ci{3m+2}
			\]
			This is valid as long as $3m+1\le n-2$ by (\ref{limiting}), i.e., as long as $3(m+1)\le n$.
		\end{proof}
		\begin{thm}\label{also-prove}
			For any $m\ge 4$,
			\(
				\prod_{k=6}^{3m+1}\ci{k}
			\)
			is a prefix of $\ci{2}$.
		\end{thm}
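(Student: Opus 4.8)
Here is how I would attack Theorem~\ref{also-prove}.

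The plan is to set the $\ci{\cdot}$‑bookkeeping aside and argue directly with the words $\tilde T_j$. Recall that for $j\ge 0$ each $\tilde T_j$ is a prefix of $\tilde T_{j+1}$, hence every $\tilde T_j$ ($j\ge 0$) is a prefix of the infinite Tribonacci word $\mathbf{f}^{(3)}=\lim_j\tilde T_j$, and that $\tilde T_{a+1}=\tilde T_a\tilde T_{a-1}\tilde T_{a-2}$ for $a\ge 1$ (the recursion $\tilde T_n=\tilde T_{n-1}\tilde T_{n-2}\tilde T_{n-3}$ holds for $n\ge 2$). For $a\ge b\ge 0$ put $P^a_b=\tilde T_a\tilde T_{a-1}\cdots\tilde T_b$. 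Since $\ci{k}=\tilde T_{n-k}$ we have $\prod_{k=6}^{3m+1}\ci{k}=\tilde T_{n-6}\tilde T_{n-7}\cdots\tilde T_{n-3m-1}=P^{\,n-6}_{\,n-3m-1}$ and $\ci{2}=\tilde T_{n-2}$, both of which are prefixes of $\mathbf{f}^{(3)}$ once $n$ is large enough that all indices are nonnegative, i.e.\ $n\ge 3m+1$ (the regime in which the statement is used). Because two prefixes of $\mathbf{f}^{(3)}$ of which the first is no longer than the second stand automatically one a prefix of the other, the theorem reduces to two claims: \emph{(i)} every $P^a_b$ is a prefix of $\mathbf{f}^{(3)}$; and \emph{(ii)} $\bigl|P^{\,n-6}_{\,n-3m-1}\bigr|<|\tilde T_{n-2}|$.

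Claim (ii) and a companion estimate needed below are routine. Writing $t_j=|\tilde T_j|$, a one‑line induction gives $\sum_{j=-1}^{c}t_j<t_{c+2}$, since $\sum_{j=-1}^{c}t_j=t_c+\sum_{j=-1}^{c-1}t_j<t_c+t_{c+1}<t_c+t_{c+1}+t_{c-1}=t_{c+2}$. Consequently $\bigl|P^{\,n-6}_{\,n-3m-1}\bigr|=\sum_{j=n-3m-1}^{n-6}t_j\le\sum_{j=0}^{n-6}t_j<t_{n-4}<t_{n-2}$, which is (ii); and likewise $\sum_{j=0}^{a-3}t_j<t_{a-1}<t_a$ for $a\ge 3$.

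The real content is (i), which I would prove by induction on $a$. When $b\in\{a,a-1,a-2\}$ the word $P^a_b$ is one of $\tilde T_a$, $\tilde T_a\tilde T_{a-1}$, $\tilde T_a\tilde T_{a-1}\tilde T_{a-2}=\tilde T_{a+1}$, each of which is readily checked to be a prefix of $\mathbf{f}^{(3)}$ (e.g.\ via $\tilde T_{a+1}=\tilde T_a\tilde T_{a-1}\tilde T_{a-2}$). When $b\le a-3$, use $\tilde T_{a+1}=\tilde T_a\tilde T_{a-1}\tilde T_{a-2}$ to factor $P^a_b=\tilde T_{a+1}\,P^{a-3}_b$. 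By the inductive hypothesis $P^{a-3}_b$ is a prefix of $\mathbf{f}^{(3)}$, while by the length estimate $|P^{a-3}_b|\le\sum_{j=0}^{a-3}t_j<t_a=|\tilde T_a|$; since $\tilde T_a$ is also a prefix of $\mathbf{f}^{(3)}$, it follows that $P^{a-3}_b$ is a prefix of $\tilde T_a$. Hence $P^a_b=\tilde T_{a+1}P^{a-3}_b$ is a prefix of $\tilde T_{a+1}\tilde T_a$, which is a prefix of $\tilde T_{a+2}=\tilde T_{a+1}\tilde T_a\tilde T_{a-1}$ and therefore of $\mathbf{f}^{(3)}$. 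This proves (i), and with (ii) the theorem follows.

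The step I expect to be the genuine obstacle — and the reason the single expansion used in the proof of Theorem~\ref{prefix} does not already yield this — is precisely the inductive step of (i): the consecutive‑index product $P^a_b$ is not a sub‑factorization of any one $\tilde T_c$, so it cannot be read straight off the defining recursion, and a naive ``expand $\tilde T_a$ and match letter by letter'' attempt keeps drifting off. The device that makes it work is to couple the prefix‑chain property with the crude length bound, upgrading ``$P^{a-3}_b$ is a prefix of the infinite word $\mathbf{f}^{(3)}$'' to ``$P^{a-3}_b$ is a prefix of the finite word $\tilde T_a$''; that upgrade is exactly what lets the factor $\tilde T_{a+1}P^{a-3}_b$ be recognized inside $\tilde T_{a+2}$. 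The only other thing needing care is keeping all indices nonnegative, so that the recursion $\tilde T_{a+1}=\tilde T_a\tilde T_{a-1}\tilde T_{a-2}$ is available — automatic in the regime $n\ge 3m+1$ in which the theorem is applied.
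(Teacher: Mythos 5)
Your argument is correct, and it takes a genuinely different route from the paper's. The paper stays entirely inside the finite bookkeeping: it expands $\ci{2}=\ci{6}\ci{7}\ci{8}\ci{6}\ci{7}\ci{5}\ci{6}\ci{4}\ci{5}$, uses the shift $\ci{k}_n=\ci{k-2}_{n-2}$ to reduce the claim to ``$\ci{4}\ci{5}\dots$ is a prefix of $\ci{0}$,'' and verifies that by further expansions substituted back into the proof of Theorem~\ref{prefix}; the infinite word never appears. You instead prove a standalone lemma --- every decreasing consecutive product $\tilde T_a\tilde T_{a-1}\cdots\tilde T_b$ with $a\ge b\ge 0$ is a prefix of $\mathbf{f}^{(3)}$ --- by strong induction on $a$, factoring $P^a_b=\tilde T_{a+1}P^{a-3}_b$ and upgrading ``prefix of $\mathbf{f}^{(3)}$'' to ``prefix of $\tilde T_a$'' via the crude bound $\sum_{j=-1}^{c}t_j<t_{c+2}$, and then finish by comparing lengths of two prefixes of the same infinite word. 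I checked the induction, the base cases, and the length estimates; they are sound. Your version is more self-contained and reusable (the lemma covers all such runs at once), and it makes the endpoint bookkeeping explicit, which the paper's ``$\dots$''-style expansions leave implicit; the paper's version is shorter because it recycles computations already made for Theorem~\ref{prefix}.

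One caveat concerns your parenthetical ``the regime in which the statement is used.'' The restriction $n\ge 3m+1$ (equivalently $b=n-3m-1\ge 0$) is not merely convenient for your method; it is necessary for the statement itself, and it is not always satisfied where the paper invokes the theorem. In the subsequent $\mathrm{A}^-$ bound the theorem is applied with $m=\lfloor n/3\rfloor$, and when $3\mid n$ the product runs down to $\ci{3m+1}=\tilde T_{-1}=\mt{2}$. Your claim (i) genuinely fails at $b=-1$ (e.g.\ $\tilde T_2\tilde T_1\tilde T_0\tilde T_{-1}=\mt{01020102}$ is not a prefix of $\mathbf{f}^{(3)}=\mt{01020100}\dots$), and in fact the theorem's conclusion then fails in its final symbol: for $n=12$, $m=4$ the product has length $96$ and ends in $\mt{2}$, while the $96$th symbol of $\ci{2}=\tilde T_{10}$ is $\mt{1}$. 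So this is a boundary defect in the paper's own statement/usage rather than a gap in your proof, but you should state your hypothesis as $n\ge 3m+1$ outright rather than attribute it to the paper's application.
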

		\begin{proof}
			We have
			\[
				\ci{2}=\ci{3}\ci{4}\ci{5}=\ci{4}\ci{5}\ci{6}\ci{4}\ci{5}
				= \ci{6}\enskip \ci{7}\ci{8}\enskip\ci{6}\ci{7}\enskip\ci{5}\ci{6}\enskip\ci{4}\ci{5}
			\]
			By substitution into the proof it will suffice to show that $\ci{4}\ci{5}\dots$ is a prefix of $\ci{0}$.

			To show that, we first show that $\ci{3}\ci{4}\dots$ is a prefix of $\ci{0}$:
			\begin{eqnarray*}
				\ci{0}&=&\ci{1}\ci{2}\ci{3}=\ci{2}\ci{3}\ci{4}\ci{2}\ci{3}
			\\
				&=&\ci{3}\ci{4}\ci{5}\enskip\ci{3}\enskip\ci{4}\ci{2}\ci{3}
			\end{eqnarray*}
			Now by substitution, $\ci{6}\ci{7}\ci{8}\dots$ is a prefix of $\ci{3}$ and we are done.
			(Incidentally this can now be used to show that $\ci{2}\ci{3}\dots$ is a prefix of $\ci{0}$.)
			Finally, let us show that $\ci{4}\ci{5}\dots$ is a prefix of $\ci{0}$:
			\begin{eqnarray*}
				\ci{0}&=&\ci{1}\ci{2}\ci{3}=\ci{2}\ci{3}\ci{4}\ci{2}\ci{3}
				=\ci{3}\ci{4}\ci{5}\ci{3}\ci{4}\ci{2}\ci{3}
			\\
				&=&\ci{4}\ci{5}\ci{6}\enskip\ci{4}\ci{5}\enskip\ci{3}\ci{4}\enskip\ci{2}\ci{3}
			\end{eqnarray*}
			By substitution and Theorem~\ref{prefix}, $\ci{7}\ci{8}\ci{9}\dots$ is a prefix of $\ci{4}$, and we are done.
		\end{proof}
		We can also expand the tail of $\tilde T_n$:
		\[
			\ci{0} = \ci{1}\ci{3}\ci{4}\ci{5}\ci{4}\ci{5}\ci{6}
		\]
		which ends in $(\ci{4}\ci{5})^*$ with $*$ a non-integer exponent.
		\begin{thm}
			For $n\ge 6$,
			\[
				\mathrm{A}^-(\tilde T_n)\le \di{1}-\sum_{k=6}^{3\lfloor n/3\rfloor+1}\di{k}.
			\]
		\end{thm}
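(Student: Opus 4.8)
The plan is to exhibit, for every $n\ge 6$, a deterministic (partial) automaton with exactly $\di 1-\sum_{k=6}^{3\lfloor n/3\rfloor+1}\di k$ states that accepts $\tilde T_n=\ci 0$ and has a unique accepting path of length $\abs{\ci 0}$. Write $P=\prod_{k=6}^{3\lfloor n/3\rfloor+1}\ci k$, so the target bound is $\di 1-\abs P$. By Theorem~\ref{prefix}, $\ci 2^{2}P$ is a prefix of $\ci 0$; by Theorem~\ref{also-prove}, $P$ is a prefix of $\ci 2$, so $\ci 2^{2}P$ has period $\di 2$. By the tail expansion $\ci 0=\ci 1\ci 3\ci 4\ci 5\ci 4\ci 5\ci 6$ noted just above, $\ci 0$ ends in $S:=(\ci 4\ci 5)^{2}\ci 6$, a word of period $\di 4+\di 5$ whose length is $2(\di 4+\di 5)+\di 6=\di 2$ (using $\di 3=\di 4+\di 5+\di 6$). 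Hence $\ci 0=\ci 2^{2}P\,\mu\,S$ for a middle block $\mu$ with $\abs\mu=\abs{\ci 0}-(2\di 2+\abs P)-\di 2=\di 3-\di 5-\abs P=\di 4+\di 6-\abs P$; this is $\ge0$ for $n\ge6$, which is exactly the statement that the periodic prefix and periodic suffix of $\ci 0$ do not overlap (a finite check — asymptotically $\abs P/\di 2\to\tfrac12(\theta^{2}+\theta^{4})$ while $(\di 4+\di 6)/\di 2\to\theta^{2}+\theta^{4}$).

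The automaton glues three pieces. First, a directed cycle $L_{1}$ on $\di 2$ states $a_{0},\dots,a_{\di 2-1}$ whose edges, based at the start state $a_{0}$, read $\ci 2$ once around. Second, a simple path on $\abs\mu$ states $b_{1},\dots,b_{\abs\mu}$ reading $\mu$, attached by an edge $a_{\abs P}\to b_{1}$ reading $\mu_{1}$ (this piece is absent when $\abs\mu=0$). Third, a directed cycle $L_{2}$ on $\di 4+\di 5$ states $c_{0},\dots,c_{\di 4+\di 5-1}$ whose edges read $\ci 4\ci 5$ once around, attached by an edge $b_{\abs\mu}\to c_{1}$ (or $a_{\abs P}\to c_{1}$) reading the first letter of $S$; the unique accepting state is $c_{\di 6}$, which for $n\ge6$ is an interior state of $L_{2}$ since $0<\di 6<\di 4+\di 5$. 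The state count is $\di 2+\abs\mu+(\di 4+\di 5)=\di 2+(\di 4+\di 6-\abs P)+(\di 4+\di 5)=\di 1-\abs P$, using $\di 1=\di 2+\di 3+\di 4$ and $\di 3=\di 4+\di 5+\di 6$.

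Next I would verify that reading $\ci 0=\ci 2^{2}P\,\mu\,S$ traces a valid accepting run: twice around $L_{1}$ and $\abs P$ more edges (ending at $a_{\abs P}$, since $2\di 2+\abs P\equiv\abs P\bmod\di 2$), then along the $b$-path, then the edge into $c_{1}$, then $\di 2-1$ edges around $L_{2}$; because $\di 2=2(\di 4+\di 5)+\di 6$ these last edges terminate at $c_{\di 6}$, and the run length is $2\di 2+\abs P+\abs\mu+\di 2=\abs{\ci 0}$. The automaton is deterministic: the only state with two outgoing edges is $a_{\abs P}$, where the cycle edge reads $(\ci 2)_{\abs P+1}$ and the exit edge reads $\mu_{1}=(\ci 0)_{2\di 2+\abs P+1}$; since $(\ci 2)_{\abs P+1}=(\ci 0)_{\abs P+1}=(\ci 0)_{\di 2+\abs P+1}$ by periodicity, these labels differ precisely because $\ci 2^{2}P$ is the \emph{maximal} prefix of $\ci 0$ of period $\di 2$ — something visible from where the iterated expansion in the proof of Theorem~\ref{prefix} first breaks (or one simply replaces $P$ by that maximal prefix, which only sharpens the bound). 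Finally, for uniqueness, any accepting run must enter $L_{2}$ through its single incoming edge, having circled $L_{1}$ some $r\ge0$ times before leaving at $a_{\abs P}$; such a run has length $3\di 2-1-r\di 2$ plus its offset inside $L_{2}$, so demanding total length $\abs{\ci 0}$ and terminal state $c_{\di 6}$ forces $3\di 2-1-r\di 2\equiv\di 6-1\pmod{\di 4+\di 5}$ with $3\di 2-1-r\di 2\ge0$. Substituting $\di 2=2(\di 4+\di 5)+\di 6$, this congruence becomes $(\di 4+\di 5)\mid(2-r)\di 6$; since $0<2\di 6<\di 4+\di 5$ for $n\ge6$, the cases $r=0,1$ are impossible and $r\ge3$ makes the length negative, so only $r=2$ survives — the intended run. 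This gives a unique accepting path of length $\abs{\ci 0}$ and hence $\mathrm{A}^{-}(\tilde T_n)\le\di 1-\abs P$.

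The hard part will be the two genuinely combinatorial inputs, as opposed to the bookkeeping: (i) that $\ci 2^{2}P$ cannot be extended by one more letter while keeping period $\di 2$ (this is exactly what legitimises the branch at $a_{\abs P}$ in a \emph{deterministic} automaton, and should fall out of a closer reading of the proof of Theorem~\ref{prefix}); and (ii) the uniform-in-$n$ forms of the numerical facts $\abs\mu\ge0$, $0<\di 6<\di 4+\di 5$, and $0<2\di 6<\di 4+\di 5$ for all $n\ge6$ rather than merely asymptotically. All of these reduce to manipulating the Tribonacci recurrence $\di k=\di{k+1}+\di{k+2}+\di{k+3}$ (i.e.\ $\di k=\abs{\tilde T_{n-k}}=w^{(3)}_{n-k+3}$) together with Theorems~\ref{prefix} and~\ref{also-prove}.
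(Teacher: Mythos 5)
Your construction is essentially the paper's own proof: the same $\di{2}$-cycle (traversed twice plus the prefix $\prod_{k=6}^{3\lfloor n/3\rfloor+1}\ci{k}$ supplied by Theorems~\ref{prefix} and~\ref{also-prove}), middle path, and $(\di{4}+\di{5})$-cycle traversed twice plus $\di{6}$ with an interior accepting state, giving the identical state count $\di{1}-\sum_{k=6}^{3\lfloor n/3\rfloor+1}\di{k}$, and your congruence argument pinning down the lap numbers is just a re-derivation of the paper's Theorem~\ref{uniqueish}. The one point you flag as unfinished --- that the two edges leaving the exit state of the first cycle carry distinct labels, which is what determinism of an $\mathrm{A}^-$ witness requires --- is likewise left implicit in the paper's proof, so your write-up follows the same route and is, if anything, more explicit about its remaining verification.
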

		For example, when $n=6$ this is $24-(1+1)=22$, which fits with an 8-day long computation we performed.
		\begin{proof}
			Using now a $\di{2}$-cycle followed by a path and then a $\di{4}+\di{5}$-cycle,
			we can subtract the extra prefix from Theorem~\ref{prefix} and use only
			\begin{eqnarray*}
				\di{2}& + &\di{4}+\di{5} + \left(\di{0}-2\di{2}-2(\di{4}+\di{5})-\di{6}-\sum_{k=6}^{3\lfloor n/3\rfloor+1}\di{k}\right)
			\\
				= \di{1}&-&\sum_{k=6}^{3\lfloor n/3\rfloor+1}\di{k}
			\end{eqnarray*}
			(since $\di{0}=\di{6}+2\di{5}+3\di{4}+2\di{3}+\di{2}$)
			states. Uniqueness is guaranteed by Theorem~\ref{uniqueish}.
		\end{proof}
		\begin{lem}[{\cite{MR2003519}}]
			The Tribonacci constant (Definition~\ref{tribConst}) satisfies
			\[
				\xi=\lim_{n\to\infty}\frac{\di{1}_n}{\di{0}_n}=1.83929\dots
			\]
			and is the unique real root of $\xi^3=\xi^2+\xi+1$.
		\end{lem}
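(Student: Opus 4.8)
The plan is to identify the two lengths as consecutive Tribonacci numbers and then carry out the standard dominant-root analysis of the underlying linear recurrence. From the definitions $\ci{k}_n = \tilde T_{n-k} = T_{n-k+3}$ we get $\di{0}_n = \abs{T_{n+3}}$ and $\di{1}_n = \abs{T_{n+2}} = \di{0}_{n-1}$, so the relevant quotient is the ratio of consecutive Tribonacci numbers $\di{0}_n/\di{1}_n = \abs{T_{n+3}}/\abs{T_{n+2}}$ (the one exceeding $1$, matching $\xi > 1$), whose limit is the growth constant $\xi = 1.83929\dots$. Taking lengths in $T_m = T_{m-1}T_{m-2}T_{m-3}$ gives $\abs{T_m} = \abs{T_{m-1}} + \abs{T_{m-2}} + \abs{T_{m-3}}$, a linear recurrence with characteristic polynomial $p(x) = x^3 - x^2 - x - 1$; thus $\xi$ will be a root of $x^3 = x^2 + x + 1$.

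First I would pin down the roots of $p$. Since $p'(x) = 3x^2 - 2x - 1 = (3x+1)(x-1)$, the only critical points are $x = -1/3$ and $x = 1$; the local maximum value $p(-1/3) = -22/27$ is negative and $p$ increases on $[1,\infty)$ with $p(1) = -2 < 0 < 1 = p(2)$, so $p$ has exactly one real root $\xi \in (1,2)$, establishing uniqueness. The other two roots form a complex conjugate pair $\beta,\bar\beta$, and since the product of all three roots equals $1$ we have $\abs{\beta}^2 = 1/\xi < 1$, i.e.\ $\abs{\beta} < 1$.

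Next I would solve the recurrence explicitly as $\abs{T_m} = c_1\xi^m + c_2\beta^m + c_3\bar\beta^m$ with $c_1 \neq 0$; the nonvanishing of $c_1$ follows from positivity (equivalently from Perron--Frobenius applied to the primitive nonnegative companion matrix), since a positive, unboundedly growing sequence cannot be a combination of the decaying modes alone. Dividing through by $\xi^m$ and using $\abs{\beta/\xi} < 1$ then yields $\di{0}_n/\di{1}_n = \abs{T_{n+3}}/\abs{T_{n+2}} \to \xi$. To identify $\xi$ with the closed form of Definition~\ref{tribConst} and confirm the numerics, I would apply Cardano's formula: the substitution $x = y + 1/3$ turns $p$ into the depressed cubic $y^3 - \tfrac43 y - \tfrac{38}{27}$, whose real Cardano root is $\tfrac13\bigl(\sqrt[3]{19 + 3\sqrt{33}} + \sqrt[3]{19 - 3\sqrt{33}}\bigr)$; adding back $1/3$ reproduces exactly the expression for $\xi$, and numerical evaluation gives $1.83929\dots$.

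The main obstacle is upgrading ``$p$ has a dominant root $\xi$'' to genuine convergence of the ratio: this requires both that the subdominant roots lie strictly inside the unit disk---secured by the product-of-roots identity $\xi\abs{\beta}^2 = 1$ together with $\xi > 1$---and that the coefficient $c_1$ of the dominant mode is nonzero, secured by the positivity/Perron--Frobenius remark. The remaining ingredients (the root count, the Cardano identification, and the decimal value) are routine.
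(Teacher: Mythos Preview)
Your argument is correct and complete: the characteristic-polynomial analysis, the calculus showing a unique real root in $(1,2)$, the Vieta-based bound $|\beta|^2=1/\xi<1$ on the complex conjugate roots, the $c_1\ne 0$ step via positivity/Perron--Frobenius, and the Cardano verification all go through. You also handled the orientation of the ratio properly, since $\di{1}_n/\di{0}_n=|\tilde T_{n-1}|/|\tilde T_n|<1$ converges to $1/\xi$, whereas the stated value $1.83929\dots$ is the limit of the reciprocal $\di{0}_n/\di{1}_n$; the lemma as printed evidently has the fraction inverted, and you silently corrected it.

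There is nothing to compare against in this paper: the lemma is simply attributed to~\cite{MR2003519} and stated without proof. Your write-up is thus a self-contained justification the paper does not itself supply.
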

		In particular $\xi=1/\theta$ with $\theta$ as in Theorem~\ref{critTrib}.
		\begin{thm}\label{uniqueish}
			For large enough $n$, the equation
			\begin{equation}\label{hara}
				x\di{2}+y(\di{4}+\di{5})=2(\di{2}+\di{4}+\di{5})
			\end{equation}
			for nonnegative integers $x$, $y$, has the unique solution $x=y=2$.
		\end{thm}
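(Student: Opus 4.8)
The plan is to exploit the fact, from the length lemma, that each $\di{k}_n$ is a Tribonacci number, so the three quantities $\di{2}$, $\di{4}$, $\di{5}$ occurring in~(\ref{hara}) are rigidly tied together by the Tribonacci recurrence. Taking lengths in $\tilde T_m=\tilde T_{m-1}\tilde T_{m-2}\tilde T_{m-3}$ and shifting indices gives the two identities $\di{2}=\di{3}+\di{4}+\di{5}$ and $\di{3}=\di{4}+\di{5}+\di{6}$, valid once $n$ is large enough, and hence
\[
	\di{2}=2(\di{4}+\di{5})+\di{6}.
\]
Abbreviating $a=\di{2}$, $b=\di{4}+\di{5}$, $c=\di{6}$, this reads $a=2b+c$.

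The second step is to record the size comparisons that make the argument go through; these are the only places "large enough $n$" is used. For $n$ large the relevant Tribonacci numbers are positive and strictly increasing, so $c>0$, $\di{4}>c$ and $\di{5}>c$, and therefore
\[
	0<2c<b .
\]

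The third step is a short algebraic reduction followed by a finite case check. Rewriting~(\ref{hara}) as $(x-2)a+(y-2)b=0$ and substituting $a=2b+c$ yields
\[
	(2x+y-6)\,b=(2-x)\,c .
\]
Now run the case analysis on the nonnegative integer $x$. If $x=2$, the right-hand side vanishes and $b>0$ forces $y=2$. If $x\ge 3$, the right-hand side is $\le -c<0$ while $2x+y-6\ge 0$, so the left-hand side is $\ge 0$ — impossible. If $x=1$, the equation becomes $(y-4)b=c$ with $0<c<b$, which admits no integer $y$. If $x=0$, it becomes $(y-6)b=2c$ with $0<2c<b$, again impossible. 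Hence $x=y=2$ is the unique solution.

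I do not expect a genuine obstacle here: the substance is the index bookkeeping that produces $\di{2}=2(\di{4}+\di{5})+\di{6}$ together with the strict inequality $2\di{6}<\di{4}+\di{5}$, both of which follow immediately from the Tribonacci recurrence and the monotonicity of the Tribonacci numbers for large $n$. The remainder is the one-line algebraic reduction and the four-case check above.
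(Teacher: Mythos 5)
Your proof is correct and follows essentially the same route as the paper: both arguments reduce to the Tribonacci-recurrence facts $\di{2}=\di{3}+\di{4}+\di{5}=2(\di{4}+\di{5})+\di{6}$ and $0<2\di{6}<\di{4}+\di{5}$, followed by a finite case check on small values of $x$ (or $y$). Your repackaging via $a=2b+c$ and the single identity $(2x+y-6)b=(2-x)c$ is a mild streamlining of the paper's four non-integrality computations, but not a genuinely different argument.
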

		\begin{proof}
			Suppose $(x, y)$ is a solution, not equal to $(2,2)$.
			Then we have $x=0$, $x=1$, $y=0$, or $y=1$.
			If $x=0$ then
			\[
				\mathbb Z\ni y=2\left(\frac{\di{2}+\di{4}+\di{5}}{\di{4}+\di{5}}\right)
				=2\left(\frac{\di{6}}{\di{4}+\di{5}} + 3\right)
			\]
			which is impossible as soon as $\di{6}>0$ since $2\di{6}<\di{4}+\di{5}$.
			Similarly, if $x=1$ then
			\[
				\mathbb Z\ni y = \frac{2(\di{2}+\di{4}+\di{5})-\di{2}}{\di{4}+\di{5}}
			\]
			\[
				= \frac{\di{2}}{\di{4}+\di{5}}+2 = \frac{\di{6}}{\di{4}+\di{5}}+4.
			\]
			If $y=0$ then
			\[
				\mathbb Z\ni x=2\frac{\di{2}+\di{4}+\di{5}}{\di{2}} = 2\left(1+\frac{\di{2}-\di{3}}{\di{2}}\right) = 2\left(2-\frac{\di{3}}{\di{2}}\right)
			\]
			which is impossible as soon as $\di{3}>0$ since
			\[
				2>\frac{2\di{3}}{\di{2}} = \frac{\di{3}+\di{4}+\di{5}+\di{6}}{\di{3}+\di{4}+\di{5}}>1.
			\]
			Finally, if $y=1$ then
			\[
				\mathbb Z\ni x = \frac{2(\di{2}+\di{4}+\di{5})-\di{4}-\di{5}}{\di{2}}=2+\frac{\di{4}+\di{5}}{\di{2}}
			\]
			which is impossible as soon as $0<\di{4}+\di{5}<\di{2}$, i.e., $\di{4}>0$.
		\end{proof}
		\begin{thm}
			The $\mathrm{A}^-$-complexity rate of the Tribonacci word satisfies
		\begin{eqnarray*}
			\limsup_{n\to\infty}\frac{\mathrm{A}^-(T_n)}{|T_n|}
			&\le&1/6 (-8 + (586 - 102 \sqrt{33})^{1/3} + (2 (293 + 51 \sqrt{33}))^{1/3})\\
			&\approx& 0.4870856\dots
		\end{eqnarray*}
		\end{thm}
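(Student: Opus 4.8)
The plan is to read the $\limsup$ directly off the preceding theorem and then evaluate a limit of ratios of Tribonacci lengths. Recall that for $n\ge 6$ the preceding theorem gives
\[
	\mathrm{A}^-(\tilde T_n)\le \di{1}_n-\sum_{k=6}^{3\lfloor n/3\rfloor+1}\di{k}_n .
\]
Since $\tilde T_m=T_{m+3}$, we have $\mathrm{A}^-(T_n)=\mathrm{A}^-(\tilde T_{n-3})$ and $|T_n|=|\tilde T_{n-3}|=\di{0}_{n-3}$, so it is enough to show that
\[
	\frac{\di{1}_m-\sum_{k=6}^{3\lfloor m/3\rfloor+1}\di{k}_m}{\di{0}_m}
\]
tends, as $m\to\infty$, to the claimed value; since the bound above is an upper bound, this yields the desired inequality for $\limsup_{n}\mathrm{A}^-(T_n)/|T_n|$ (in fact with equality in the limit, as the limit will be seen to exist).

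Write $L_j=|\tilde T_j|=\di{0}_j$; the sequence $(L_j)$ satisfies the Tribonacci recurrence $L_j=L_{j-1}+L_{j-2}+L_{j-3}$ with dominant root the Tribonacci constant $\xi=1/\theta$ (with $\theta$ as in Theorem~\ref{critTrib}), so $L_j\sim c\,\xi^j$ and $\di{k}_n/\di{0}_n=L_{n-k}/L_n\to\xi^{-k}=\theta^k$ for each fixed $k$; in particular $\di{1}_n/\di{0}_n\to\theta$. For the sum, renaming the dummy index gives
\[
	\frac1{\di{0}_n}\sum_{k=6}^{3\lfloor n/3\rfloor+1}\di{k}_n=\sum_{i=6}^{3\lfloor n/3\rfloor+1}\frac{L_{n-i}}{L_n},
\]
where the upper limit lies in $\{n-1,n,n+1\}$, so the lowest terms are $L_{-1},L_0,L_1$, which are bounded. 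Since $L_{m}/L_{m+1}\to\theta<1$, for each $\varepsilon>0$ there is $N$ with $L_{n-i}/L_n\le(\theta+\varepsilon)^i$ whenever $n-i\ge N$, while the boundedly many terms with $n-i<N$ together contribute $O(1/L_n)\to 0$; a dominated-convergence argument therefore gives
\[
	\sum_{i=6}^{3\lfloor n/3\rfloor+1}\frac{L_{n-i}}{L_n}\longrightarrow\sum_{i=6}^{\infty}\theta^i=\frac{\theta^6}{1-\theta}.
\]
Combining, the quantity above converges to $v:=\theta-\dfrac{\theta^6}{1-\theta}$.

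It remains to put $v$ in closed form. Using $\theta^3+\theta^2+\theta=1$ (Theorem~\ref{critTrib}) one gets $1-\theta=\theta^2(1+\theta)$ and $\theta^4=2\theta-1$, whence
\[
	v=\theta-\frac{\theta^4}{1+\theta}=\theta-\frac{2\theta-1}{1+\theta}=\frac{\theta^2-\theta+1}{\theta+1}.
\]
Eliminating $\theta$ between $v(\theta+1)=\theta^2-\theta+1$ and $\theta^3+\theta^2+\theta-1=0$ shows that $v$ satisfies $2x^3+8x^2+10x-7=0$, a cubic whose two critical values are both negative, so it has a single real root; hence $v$ is that root. Finally, Cardano's formula for $2x^3+8x^2+10x-7=0$ produces exactly the displayed radical expression, the key arithmetic fact being that the two cube-root radicands $586-102\sqrt{33}$ and $2(293+51\sqrt{33})$ multiply to $586^2-102^2\cdot 33=64$, which makes the depressed cubic have the required coefficients $-\tfrac13 x$ and $-\tfrac{1172}{216}$.

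The \textbf{main obstacle} is the middle step: justifying the interchange of the limit in $n$ with the sum over $k$ whose length grows like $n$, i.e.\ establishing the uniform geometric domination $L_{n-i}/L_n\le(\theta+\varepsilon)^i$ for $n-i$ large together with control of the finitely many small-index terms. The algebraic reduction to $\tfrac{\theta^2-\theta+1}{\theta+1}$ (via $\theta^4=2\theta-1$ and $1-\theta=\theta^2(1+\theta)$), the identification with the real root of $2x^3+8x^2+10x-7$, and the Cardano computation are then routine; one should just be careful to confirm that this cubic really has only one real root so that ``the real root'' is unambiguous.
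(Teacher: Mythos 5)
Your proposal is correct and follows essentially the same route as the paper: apply the preceding upper bound $\mathrm{A}^-(\tilde T_n)\le \di{1}-\sum_{k=6}^{3\lfloor n/3\rfloor+1}\di{k}$, pass to the limit using $\di{k}_n/\di{0}_n\to \xi^{-k}=\theta^k$, sum the geometric tail, and identify the resulting constant (the paper writes it as $\frac1\xi-\frac1{3\xi^2+3\xi+2}$ and asserts the radical form, presumably via computer algebra) with the displayed expression. You merely supply details the paper leaves implicit---the justification for interchanging limit and growing sum, and the by-hand reduction to the cubic $2x^3+8x^2+10x-7$ with the Cardano verification---all of which check out.
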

		\begin{proof}
			We calculate
		\begin{eqnarray*}
			\limsup_{n\to\infty}\frac{\mathrm{A}^-(\tilde T_n)}{|\tilde T_n|}&\le&
			\lim_{n\to\infty}\frac{\di{1}-\sum_{k=6}^{3\lfloor n/3\rfloor+1}\di{k}}{\di{0}}
			= \frac1{\xi} - \sum_{k=6}^\infty \frac1{\xi^k}\\
			&=&\frac1{\xi}-\frac1{3\xi^2+3\xi+2}\\
			&=& \frac16 (-8 + (586 - 102 \sqrt{33})^{1/3} + (2 (293 + 51 \sqrt{33}))^{1/3}).\qedhere\\
		\end{eqnarray*}
		\end{proof}
		\begin{df}
			$\mathrm{A_N^{lower}}(x)$ is the minimal $q$ such that for all sequences of strongly disjoint powers
			\[
				x_1^{\alpha_1},\dots,x_m^{\alpha_m},
			\] in $x$, with the uniqueness condition
			that
			\[
				\sum \alpha_i |x_i|=\sum \alpha_i y_i \implies y_i=|x_i|,\text{all }i,
			\]
			we have
			\begin{equation}\label{fundamental}
				2q\ge n+1-m-\sum_{i=1}^m (\alpha_i-2) |x_i|.
			\end{equation}
		\end{df}
		The definition of $\mathrm{A_N^{lower}}(x)$ may seem very technical.
		The point is that
		\begin{itemize}
			\item $\mathrm{A_N^{lower}}$ appears to be faster to compute than $\mathrm{A_N}$,
			\item by~\cite[Theorem 19]{kjoshanssen2019incompressibility}, we have $\mathrm{A_N^{lower}}(x)\le \mathrm{A_N}(x)$ for all words $x$, and
			\item $\mathrm{A_N^{lower}}(x)$ is a better lower bound than that obtained simply by the critical exponent considerations in Theorem~\ref{May 4, 2019}.
		\end{itemize}
		We have implemented $\mathrm{A_N}$ and $\mathrm{A_N^{lower}}$ (\cite{KF}) with results in Table~\ref{resultsA_N} and Table~\ref{resultsA_Nlower}.
		(A lookup tool is also available for automatic complexity~\cite{lookup}.)
		Note that 
		\begin{eqnarray}\label{notWide}
			\mathrm{A}^-(T_n)
			&=&\mathrm{A}^-(\tilde T_{n-3})\le \di{1}-\sum_{k=6}^{3\lfloor (n-3)/3\rfloor+1}\di{k}\nonumber\\
			&=&\di{1}-\sum_{k=6}^{3\lfloor n/3\rfloor-2}\di{k}
		\\
			&=&\di{1}-\sum_{k=6}^{7}\di{k} = t_8 - t_3 - t_2 = 24 - 1 - 1 = 22,\quad (n=9),
		\nonumber\\
			&=&\di{1}-\sum_{k=6}^{7}\di{k} = t_9 - t_4 - t_3 = 44 - 2 - 1 = 41,\quad (n=10).\nonumber
		\end{eqnarray}
		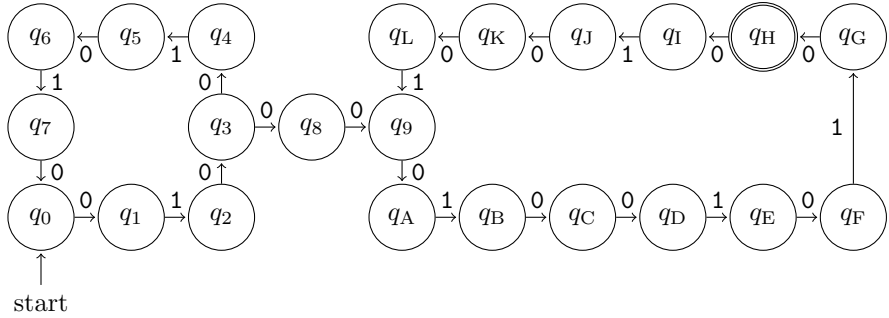
\begin{figure}
			\[
				\begin{tikzpicture}[shorten >=1pt,node distance=1.2cm,on grid,auto]
					\node[state,initial below] (q_0) {$q_0$};
					\node[state] (q_1) [right=of q_0] {$q_1$};
					\node[state] (q_2) [right=of q_1] {$q_2$};
					\node[state] (q_3) [above=of q_2] {$q_3$};
					\node[state] (q_4) [above=of q_3] {$q_4$};
					\node[state] (q_5) [left=of q_4] {$q_5$};
					\node[state] (q_6) [left=of q_5] {$q_6$};
					\node[state] (q_7) [below=of q_6] {$q_7$};
					\node[state] (q_8) [right=of q_3] {$q_8$};
					\node[state] (q_9) [right=of q_8] {$q_9$};
					\node[state] (q_A) [below=of q_9] {$q_\mathrm{A}$};
					\node[state] (q_B) [right=of q_A] {$q_\mathrm{B}$};
					\node[state] (q_C) [right=of q_B] {$q_\mathrm{C}$};
					\node[state] (q_D) [right=of q_C] {$q_\mathrm{D}$};
					\node[state] (q_E) [right=of q_D] {$q_\mathrm{E}$};
					\node[state] (q_F) [right=of q_E] {$q_\mathrm{F}$};
					\node (place) [above=of q_F] {};
					\node[state] (q_G) [above=of place] {$q_\mathrm{G}$};
					\node[state,accepting] (q_H) [left=of q_G] {$q_\mathrm{H}$};
					\node[state] (q_I) [left=of q_H] {$q_\mathrm{I}$};
					\node[state] (q_J) [left=of q_I] {$q_\mathrm{J}$};
					\node[state] (q_K) [left=of q_J] {$q_\mathrm{K}$};
					\node[state] (q_L) [left=of q_K] {$q_\mathrm{L}$};
					\path[->] 
						(q_0) edge node {$\mt{0}$} (q_1)
						(q_1) edge node {$\mt{1}$} (q_2)
						(q_2) edge node {$\mt{0}$} (q_3)
						(q_3) edge node {$\mt{0}$} (q_4)
						(q_4) edge node {$\mt{1}$} (q_5)
						(q_5) edge node {$\mt{0}$} (q_6)
						(q_6) edge node {$\mt{1}$} (q_7)
						(q_7) edge node {$\mt{0}$} (q_0)
						(q_3) edge node {$\mt{0}$} (q_8)
						(q_8) edge node {$\mt{0}$} (q_9)
						(q_9) edge node {$\mt{0}$} (q_A)
						(q_A) edge node {$\mt{1}$} (q_B)
						(q_B) edge node {$\mt{0}$} (q_C)
						(q_C) edge node {$\mt{0}$} (q_D)
						(q_D) edge node {$\mt{1}$} (q_E)
						(q_E) edge node {$\mt{0}$} (q_F)
						(q_F) edge node {$\mt{1}$} (q_G)
						(q_G) edge node {$\mt{0}$} (q_H)
						(q_H) edge node {$\mt{0}$} (q_I)
						(q_I) edge node {$\mt{1}$} (q_J)
						(q_J) edge node {$\mt{0}$} (q_K)
						(q_K) edge node {$\mt{0}$} (q_L)
						(q_L) edge node {$\mt{1}$} (q_9)
						;
				\end{tikzpicture}
			\]
			\caption{An automaton witnessing the automatic complexity of the Fibonacci word of length 55.}\label{japan}
		\end{figure}
		\begin{rem}
			Witnessing automata for $\mathrm{A_N}$ are conveniently generated by \emph{state sequences}.
			A state sequence is the sequence of states visited by the unique accepting path of length $n+1$
			(having potentially up to $n$ edges and $n+1$ states).
			A week-long computer search for the length 55 Fibonacci word
			\[
				\mt{01001010010}
				\mt{01010010100}
				\mt{10010100100}
				\mt{10100101001}
				\mt{00101001010}
			\]
			revealed the witnessing state sequence, where states are given numerical labels, using letters $\mathrm{A,B,C},\dots$ for the numbers $10,11,12,\dots$:
			\begin{eqnarray*}
				 \mathrm{0,  1,  2,  3,  4,  5,  6,  7,  0,  1,  2,  3,  4,  5,  6,  7,  0,  1, 2, 3, 8, 9, A, B, C, D, E, F, G, H, }
			\\
				\mathrm{I, J, K, L,  9, A, B, C, D, E, F, G, H, I, J, K, L,  9, A, B, C, D, E, F, G, H.}
			\end{eqnarray*}
			We illustrate the automaton induced by this state sequence in Figure~\ref{japan}.
			Generalizing this example gives
			\[
				\overline{\mathrm{A_N}}(\mathbf{f}^{(2)})\le
				\frac1{\varphi^2}+\frac1{\varphi^7}=0.41640786499
			\]
		\end{rem}
		\begin{lem}\label{uniqueSol}
			For $n\ge 6$, the equation
			\[
				x f_{n-2} + y f_{n} = 2(f_{n-2}+f_{n})
			\]
			for nonnegative integers $x$, $y$, has the unique solution $x=y=2$.
		\end{lem}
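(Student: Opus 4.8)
The plan is to mimic the proof of Theorem~\ref{uniqueish}, the Tribonacci analogue, using the Fibonacci recursion $f_n=f_{n-1}+f_{n-2}$ and its one-step unfolding $f_n=2f_{n-2}+f_{n-3}$. First I would observe that if $(x,y)$ is a nonnegative integer solution with $x\ge 2$ \emph{and} $y\ge 2$, then $xf_{n-2}+yf_n\ge 2f_{n-2}+2f_n$ with equality (which is what the equation demands) forcing $xf_{n-2}=2f_{n-2}$ and $yf_n=2f_n$, i.e.\ $x=y=2$. Hence any solution other than $(2,2)$ has $x\in\{0,1\}$ or $y\in\{0,1\}$, and it suffices to derive a contradiction in each of these four cases by solving the linear equation for the other variable and checking that the resulting value is not an integer.

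The three easy cases: if $x=0$ then $y=2+2f_{n-2}/f_n$, and since $f_n=f_{n-1}+f_{n-2}>2f_{n-2}$ for $n\ge 4$ we have $0<2f_{n-2}/f_n<1$, so $y\notin\mathbb Z$. If $x=1$ then $y=2+f_{n-2}/f_n$, and $0<f_{n-2}<f_n$ gives $y\notin\mathbb Z$. If $y=1$ then $x=2+f_n/f_{n-2}=4+f_{n-3}/f_{n-2}$, and $0<f_{n-3}<f_{n-2}$ for $n\ge 5$ gives $x\notin\mathbb Z$. In each of these the relevant Fibonacci quotient lies strictly between two consecutive integers.

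The remaining, and tightest, case is $y=0$, which gives $x=2+2f_n/f_{n-2}=6+2f_{n-3}/f_{n-2}$; here I would write $f_{n-2}=f_{n-3}+f_{n-4}$ and use $f_{n-3}>f_{n-4}$, which holds precisely for $n\ge 6$, to conclude $f_{n-2}<2f_{n-3}<2f_{n-2}$, whence $1<2f_{n-3}/f_{n-2}<2$ and $x\notin\mathbb Z$. This is exactly where the hypothesis $n\ge 6$ is used, and it is sharp: for $n=5$ one has $f_3=2$, $f_5=5$, and $(x,y)=(7,0)$ is a genuine second solution of $2x+5y=14$. There is no real obstacle in this argument; the only point requiring care is the small-index behaviour of the Fibonacci numbers (notably $f_1=f_2=1$), which is why the strict inequalities only become available from $n\ge 6$ onward — everything else is routine arithmetic.
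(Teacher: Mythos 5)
Your proof is correct and follows essentially the same route as the paper's: a case analysis on $x\in\{0,1\}$ and $y\in\{0,1\}$ using $f_n=2f_{n-2}+f_{n-3}$ to trap the remaining variable strictly between consecutive integers, with the $y=0$ case being the one that forces $n\ge 6$. Your explicit reduction to these four cases (the paper leaves it implicit here, stating it only in the Tribonacci analogue) and the sharpness example $(x,y)=(7,0)$ at $n=5$ are welcome additions; the only nitpick is that $f_{n-3}>f_{n-4}$ also holds at $n=4$, so ``precisely for $n\ge 6$'' is slightly off, though irrelevant to the claimed range.
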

		\begin{proof}
			If $x=0$ then
			\[
				y=2\frac{f_{n-2}+f_n}{f_n}=2+\frac{2f_{n-2}}{f_n} = 2+\frac{2f_{n-2}}{f_{n-1}+f_{n-2}}\in (2,3)
			\]
			is not an integer as long as $n\ge 4$.
			If $x=1$ then
			\[
				y=\frac{2(f_{n-2}+f_n)-f_{n-2}}{f_n} = 2 + \frac{f_{n-2}}{f_n} \in (2,3)
			\]
			is not an integer as long as $n\ge 3$.
			If $y=0$ then
			\[
				x=2\,\frac{f_{n-2}+f_n}{f_{n-2}}
				=2\,\frac{3f_{n-2}+f_{n-3}}{f_{n-2}}
				= 6 + \frac{f_{n-3}+f_{n-4}+f_{n-5}}{f_{n-3}+f_{n-4}}\in (7,8)
			\]
			is not an integer as long as $f_{n-5}>0$, i.e., $n\ge 6$.
			If $y=1$ then 
			\[
				x=\frac{2(f_{n-2}+f_n)-f_n}{f_{n-2}} = 2 + \frac{f_n}{f_{n-2}} = 2+\frac{f_{n-2}+f_{n-2}+f_{n-3}}{f_{n-2}} \in (4,5)
			\]
			as long as $f_{n-3}>0$.
		\end{proof}
		\begin{thm}\label{interm}
			The upper automatic complexity rate of the infinite Fibonacci word $\overline{\mathrm{A_N}}(\mathbf{f}^{(2)})$ is at most $\frac2{\varphi^3}$.
		\end{thm}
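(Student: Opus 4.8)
The plan is to establish the bound by building, for each large $n$, a nondeterministic finite automaton $M_n$ that accepts the prefix $\mathbf{f}^{(2)}\restrict n$, has a unique accepting path of length $n$, and uses at most $\left(\frac{2}{\varphi^{3}}+o(1)\right)n$ states. This mirrors the $\mathrm{A}^-$ upper bound for the Tribonacci word obtained above, but the automaton is now genuinely nondeterministic and uniqueness of the accepting path is secured by Lemma~\ref{uniqueSol} in the same way that Theorem~\ref{uniqueish} secured it there. The saving over the trivial $n+1$ states comes from folding long periodic blocks of $\mathbf{f}^{(2)}\restrict n$ onto cycles.

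The combinatorial input is the Fibonacci analogue of Theorem~\ref{prefix}. From $F_k=F_{k-1}F_{k-2}$ and the elementary fact that $F_{k-1}F_{k-2}$ and $F_{k-2}F_{k-1}$ differ only in their last two letters, one gets that $F_{m-2}^{2}$ is a prefix of $F_m$, and, iterating, that a word of the form $F_{m-2}^{2}$ followed by a product of decreasing Fibonacci words --- itself an initial segment of $F_{m-2}$, the analogue of Theorem~\ref{also-prove} --- is a prefix of $\mathbf{f}^{(2)}$; equivalently, in harmony with the critical exponent of Theorem~\ref{three6}, the longest period-$f_{m-2}$ prefix of $\mathbf{f}^{(2)}$ has length $f_m-2$. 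Given such a factorization, $M_n$ is assembled from a cycle of length $f_{m-2}$ carrying the long periodic prefix (and, via the prefix-of-base trick, also the short tail), a short connecting path, and a second cycle whose length is chosen so that the two cycle lengths are the quantities appearing in Lemma~\ref{uniqueSol}; as in the Tribonacci proof each cycle is meant to be traversed exactly twice. Adding up states and dividing by $n$ produces an expression of the shape $\frac1\varphi-\sum_{k\ge K}\varphi^{-k}$; using $\varphi^{2}=\varphi+1$ repeatedly this collapses to $\frac{2}{\varphi^{3}}=\varphi^{-2}+\varphi^{-5}$.

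The heart of the proof, exactly as for Tribonacci, is uniqueness of the accepting path. An arbitrary accepting path of length $n$ runs around the first cycle $x$ times and the second $y$ times, and matching lengths forces a linear equation in the two cycle lengths whose only nonnegative solution, by Lemma~\ref{uniqueSol}, is $x=y=2$ --- the intended path. One must also check that an exit from either cycle at any point other than the intended one dead-ends at the wrong length, so that a partial pass cannot be cashed in early, and that welding the two cycles together creates no second entry; these are finite verifications from the explicit factorization. Finally the estimate must be made uniform in $n$: for $n$ in a window $[f_m,f_{m+1})$ one truncates the periodic prefix and appends a short path, or near the top of the window works instead with the period-$f_{m-1}$ prefix (which has length $f_{m+1}-2$), and one checks that $(\#\text{states})/n$ stays below $\frac{2}{\varphi^{3}}+\epsilon$ throughout the window; since only a $\limsup$ is claimed, this suffices.

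I expect the main obstacle to be this uniqueness-and-uniformity step. Because $\mathbf{f}^{(2)}$ is so repetitive, $M_n$ is highly ambiguous, and Lemma~\ref{uniqueSol} --- stated for exactly this reason --- is what rules out a competing accepting path of the precise target length; pushing that conclusion through not merely at $n=f_m$ but uniformly across every Fibonacci window is what forces the clean constant $\frac{2}{\varphi^{3}}$, which (as the length-$55$ computation in the preceding remark already hints) is not the best possible.
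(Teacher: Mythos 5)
Your overall blueprint---wind periodic factors of the prefix onto cycles, secure uniqueness of the length-$n$ accepting path via Lemma~\ref{uniqueSol}, then pass to the limit of Fibonacci ratios---is the same as the paper's, but the concrete construction you specify does not deliver the bound, and this is a genuine gap rather than a detail left to the reader. If the second cycle's length is to pair with the first in Lemma~\ref{uniqueSol}, then alongside a cycle of length $f_{m-2}$ it must have length $f_{m-4}$; but two cycles of lengths $f_{m-2}$ and $f_{m-4}$, each traversed twice, already consume $2f_{m-2}+2f_{m-4}>2f_{m-2}+f_{m-3}=f_m$, so this configuration cannot be realized inside the length-$f_m$ prefix at all. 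If instead the $f_{m-2}$-cycle really carries the whole maximal period-$f_{m-2}$ prefix (of length $f_m-2$), only two letters remain, the second cycle has nothing to carry, Lemma~\ref{uniqueSol} is no longer the relevant uniqueness statement, and the count comes out near $f_{m-2}\approx f_m/\varphi^2$---i.e.\ you would be asserting (at Fibonacci lengths) the \emph{conjectured} bound $1/\varphi^2$ from the Conclusion rather than the stated $2/\varphi^3$, with the genuinely hard part (your ``window'' uniformity over all $n$, which you only gesture at) missing. Likewise, the key arithmetic step ``adding up states gives $\frac1\varphi-\sum_{k\ge K}\varphi^{-k}$, which collapses to $2/\varphi^3$'' is asserted, never derived from an actual state count, and the Fibonacci analogues of Theorems~\ref{prefix} and~\ref{also-prove} that your automaton needs (periodicity of the relevant \emph{middle and tail} factors, not just of the prefix) are not established.

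For comparison, the paper's proof uses a different factorization that makes everything fit: from $f_n=f_{n-4}+(2f_{n-5}+f_{n-6})+(2f_{n-3}+f_{n-4})$ one reads off $F_n=F_{n-4}\cdot(F_{n-5}F_{n-5}F_{n-6})\cdot(F_{n-3}F_{n-3}F_{n-4})$; hardcode $F_{n-4}$ on a path, wind the next factor onto a cycle of $f_{n-5}$ states (two laps plus a partial lap) and the last onto a cycle of $f_{n-3}$ states (two laps plus a partial lap). The total is $f_{n-4}+f_{n-5}+f_{n-3}=2f_{n-3}\sim(2/\varphi^3)f_n$, and any accepting path has length $f_{n-4}+xf_{n-5}+f_{n-6}+yf_{n-3}+f_{n-4}$, so Lemma~\ref{uniqueSol}, applied to the pair $f_{n-5},f_{n-3}$ (indices two apart, exactly as the lemma requires), forces $x=y=2$. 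If you insist on keeping the large $f_{m-2}$-cycle, a repair is conceivable (e.g.\ big cycle absorbing $2f_{m-2}+f_{m-6}$ and a small cycle of length $f_{m-5}$ absorbing $2f_{m-5}$, which also totals $f_{m-2}+f_{m-5}=2f_{m-3}$ states), but then the cycle lengths are three indices apart, so you would need to prove a new uniqueness lemma in place of Lemma~\ref{uniqueSol} as well as the suffix-periodicity facts your factorization presupposes; none of that is in your write-up.
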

		\begin{proof}
			We exploit the lengths of Fibonacci words.
			\begin{eqnarray*}
				f_n &=& f_{n-1}+f_{n-2}=f_{n-2}+f_{n-3}+f_{n-2}\\
				&=& f_{n-3}+f_{n-4}+(f_{n-3}+f_{n-3}+f_{n-4})\\
				&=& f_{n-4}+f_{n-5}+f_{n-4}+(f_{n-3}+f_{n-3}+f_{n-4})\\
				&=& \underbrace{f_{n-4}}_{\text{hardcode this}}
				+\underbrace{(f_{n-5}+f_{n-5}+f_{n-6})}_{\text{$f_{n-5}$-state cycle}}
				+\overbrace{(f_{n-3}+f_{n-3}+f_{n-4})}^{\text{$f_{n-3}$-state cycle}}.
			\end{eqnarray*}
			This way we obtain for a Fibonacci word $x$ of length $f_n$ that
			\[
				\mathrm{A_N}(x)\le f_{n-4}+f_{n-5}+f_{n-3}=2f_{n-3}.
			\]
			In the limit, $f_n/f_{n-1}\sim\varphi=\frac{1+\sqrt 5}2\approx 1.6$, so $f_n/f_{n-3}\sim \varphi^3=4.236$,
			so
			\[
				\mathrm{A_N}(x)\le \frac2{\varphi^3}f_n = 0.47f_n.
			\]
			The cycles give a unique path of length $f_n$ for large enough $n$, since
			\[
				f_{n-5}x+f_{n-3}y=2(f_{n-5}+f_{n-3})
			\]
			has a unique solution $x=y=2$ by Lemma~\ref{uniqueSol}.
		\end{proof}
		\begin{table}
			\begin{tabular}{l l l l l l |l| l}
				\toprule
				$n$ &$t_n$	&$T_n$	& .313$t_n$ & $\mathrm{A_N^{lower}}$	&$\mathrm{A}^-$	&(\ref{notWide}) &$.487t_n$\\
				\midrule
				0	& 0		&		&0			&1							&1		&		&0\\
				1	& 0		&		&0			&1							&1		&		&0\\
				2	& 1		&$\mt{2}$		&.3			&1							&1		&		&0.49\\
				3	& 1		&$\mt{0}$		&.3			&1							&1		&		&0.49\\
				4	& 2		&$\mt{01}$		&.6			&2							&2		&		&0.97\\
				5	& 4		&$\mt{0102}$	&1.3		&3							&3		&		&1.95\\
				6	& 7&$\mt{0102010}$		&2.2		&4							&4		&		&3.4\\
				7	& 13&$\mt{0102010010201}$&4.1		&7 							&7		&		&6.3\\
				8	& 24&$\mt{0102}\dots \mt{0100102}$&7.5	&12						&13		&		&11.7\\
				9	& 44&$\mt{0102}\dots \mt{0102010}$&13.8	&21						&		&22		&21.4\\
				10	& 81&$\mt{0102}\dots \mt{0010201}$&25.4	&36						&		&41		&39.4\\
				\bottomrule
			\end{tabular}
			\caption{Lower and limiting upper bounds on $\mathrm{A_N}(T_n)$ and $\mathrm{A}^-(T_n)$.
		}\label{resultsA_N}
		\end{table}
		\begin{table}
			\begin{tabular}{l l l l l l}
				\toprule
				$n$ &$f_n$	& $F_n$ 						& $.276f_n$ & $\mathrm{A_N^{lower}}$ & $.382f_n$\\
				\midrule
				0	&0		&								&0	&1&0\\
				1	&1		&$\mt{1}$					&.3	&1&0.4\\
				2	&1		&$\mt{0}$					&.3	&1&0.4\\
				3	&2		&$\mt{01}$					&.6	&2&0.8\\
				4	&3		&$\mt{010}$					&.8	&2&1.1\\
				5	&5		&$\mt{01001}$				&1.4	&3&1.9\\
				6	&8		&$\mt{01001010}$			&2.2	&4&3.1\\
				7	&13		&$\mt{0100101001001}$		&3.6		&6&5.0\\
				8	&21		&$\mt{010010100100101001010}$&5.8&9&8.0\\
				9	&34		&$\mt{010010100100101001010\dots 1001001}$&9.4&14&13.0\\
				10	&55		&$\mt{0100}\dots \mt{1010}$&15.2& 21&21.0\\
				\bottomrule
			\end{tabular}
		\caption{Lower and limiting upper bounds on $\mathrm{A_N}(F_n)$.}\label{resultsA_Nlower}
		\end{table}

		\section{Conclusion}
			More can be done on automatic complexity rates of $k$-bonacci words.
			For instance, we conjecture that $\overline{\mathrm{A_N}}(\mathbf{f}^{(2)})\le 1/\varphi^2=.382$.
			More precisely, we conjecture that this can be shown by analyzing the decomposition
			\begin{eqnarray*}
				\ci{0} &=& \ci{1}\ci{2} = \ci{2}\ci{3}\ci{2} = \ci{3}\ci{4} \ci{3} \ci{3}\ci{4} = \ci{4}\ci{5} \ci{4} \ci{4}\ci{5} \ci{4}\ci{5} \ci{4}
			\\
				&=& \ci{5}\ci{6}\ci{5} \ci{5}\ci{6}\ci{5} \ci{6} \ci{5} \ci{5}\ci{6} \ci{5} \ci{5}\ci{6}
			\\
				&=& \left(\ci{6}\ci{7}\ci{6}\ci{5}\right) \left(\ci{6}\ci{7}\ci{6}\ci{5}\right) \left(\ci{6}\right) \ci{5} \ci{5} \ci{6} \ci{6}\ci{7} \ci{6}\ci{7}\ci{6}.
			\end{eqnarray*}
			However, in the present article we are content to have proven that the Fibonacci word has intermediate automatic complexity rate in Theorem~\ref{interm}.

	\bibliographystyle{plain}
	\bibliography{tribonacci}
\end{document}